\makeatletter\AtBeginDocument{\let\@elt\relax}\makeatother
\renewcommand{\vec}[1]{\mathbf{#1}}
\newcommand{\matr}[1]{\mathbf{#1}}
\definecolor{orcidlogocol}{HTML}{A6CE39}
\definecolor{lime}{HTML}{A6CE39}
\DeclareRobustCommand{\orcidicon}{%
	\begin{tikzpicture}
	\draw[lime, fill=lime] (0,0) 
	circle [radius=0.16] 
	node[white] {{\fontfamily{qag}\selectfont \tiny ID}};
	\draw[white, fill=white] (-0.0625,0.095) 
	circle [radius=0.007];
	\end{tikzpicture}
	\hspace{-2mm}
}
\xdef\csname orcid\x\endcsname{\noexpand\href{https://orcid.org/\csname orcidauthor\x\endcsname}{\noexpand\orcidicon}}
\begin{document}

\title{Dual communities in spatial networks}

\author{Franz Kaiser\orcidF}
  \affiliation{Forschungszentrum J\"ulich, Institute for Energy and Climate Research (IEK-STE), 52428 J\"ulich, Germany}
  \affiliation{Institute for Theoretical Physics, University of Cologne, 50937 K\"oln, Germany}

\author{Philipp C. Böttcher\orcidP}
  \affiliation{Forschungszentrum J\"ulich, Institute for Energy and Climate Research (IEK-STE), 52428 J\"ulich, Germany}

\author{Henrik Ronellenfitsch\orcidH}
\affiliation{Physics Department, Williams College, 33 Lab Campus Drive, Williamstown, MA 01267, U.S.A.}
\affiliation{Department of Mathematics, Massachusetts Institute of Technology, Cambridge, MA 02139, U.S.A.}
 
\author{Vito Latora\orcidV}
  \affiliation{School of Mathematical Sciences, Queen Mary University of London, London E1 4NS, UK}
    \affiliation{Dipartimento di Fisica ed Astronomia, Universit{\`a} di Catania and INFN, 95123 Catania, Italy}
    \affiliation{Complexity Science Hub Vienna, 1080 Vienna, Austria}

\author{Dirk Witthaut\orcidD}%
  \affiliation{Forschungszentrum J\"ulich, Institute for Energy and Climate Research (IEK-STE), 52428 J\"ulich, Germany}
  \affiliation{Institute for Theoretical Physics, University of Cologne, 50937 K\"oln, Germany}

\date{\today}

\begin{abstract}
Both human-made and natural supply systems, such as power grids and leaf venation networks, are built to operate reliably under changing external conditions.
Many of these spatial networks exhibit community structures.
Here, we show that a relatively strong connectivity between the parts of a network can be used to define a different class of communities: dual communities.
We demonstrate that traditional and dual communities emerge naturally as two different phases of optimized network structures that are shaped by fluctuations and that they suppress failure spreading, which underlines their importance in understanding the shape of real-world supply networks.
\end{abstract}

\maketitle

\section*{Introduction}

Community structures are a fundamental trait of complex networks and have found numerous applications in systems ranging from social networks~\cite{radicchi_defining_2004} to biological networks~\cite{girvan_community_2002,fletcher_network_2013} and critical infrastructures such as power grids~\cite{shekhtman_resilience_2015}. 
Typically, communities are defined by a strong connectivity within the community compared to a relatively weak connectivity between different communities~\cite{radicchi_defining_2004,newman_modularity_2006,newman_communities_2012}.
They may correspond to functional units of the network, for instance in metabolic networks \cite{guimera2005functional}, or actual communities in social networks.

Intuitively, community structures play an important role for the spreading of failures or perturbations in networked systems. 
The low connectivity impedes spreading processes, such that perturbations can be expected to stay within the community, which is both predicted by theory~\cite{Manik2017,may_will_1972} as well as observed in experiments~\cite{gilarranz_effects_2017}.  
Community structures thus provide robustness in complex networks, but other structural patterns may have a comparable effect. In particular, it has been shown that hierarchical structures may provide similar features, for instance in vascular networks of plants \cite{Kati10,gavrilchenko_resilience_2019}. 

In this article we provide a unified view on the role of communities and hierarchies for network robustness based on the concept of graph duality. 
The dual graph is most naturally defined for spatially embedded networks, i.e.~networks that are embedded in the plane without edges crossing each other. 
This class of networks includes a large variety of man-made and biological systems~\cite{viana_simplicity_2013,barthelemy_spatial_2011,Barthelemy2018,Barthelemy2016}. 
The vertices of the dual correspond to the faces of the original, primal graph. 
Two dual vertices (faces) are connected if they share at least one edge. 
Graph duality has been previously used to study fixed points in oscillator networks~\cite{manik_cycle_2017,dorfler_synchronization_2013} and to speed up network algorithms~\cite{ronellenfitsch_dual_2017,17lodf}. 
Here we use this concept to reveal patterns in the network structure that are hidden in the primal graph. 
In particular, we establish dual communities -- communities that are defined within the dual graph -- and highlight their relation to hierarchical network structures. 
Furthermore, graph duality readily explains why both weak and strong connections can make a network robust: Strong connections in the primal translate into weak connections in the dual and vice versa.

The article is organized as follows. 
We first demonstrate how different structural patterns impede spreading processes and thus contribute to the robustness of a network. 
Focusing on flow networks, we formalize the concept of graph duality and establish dual communities. 
Second, we investigate essential properties of dual communities, in particular their link to hierarchical patterns and the geometry of the community boundaries.
Finally, we study the emergence and impacts of communities structures. 
Using optimization models, we study why networks develop a primal or a dual community structure. 
We provide a deeper analysis of the link between communities and network robustness by employing different simulation models. 
Throughout the article we use the terms graph and network interchangeably.
The term `graph' stresses the structural aspects while `network' stresses the functional aspects of the system.

\section*{Results}

\subsection*{Network robustness and community structure}

We first highlight how  network robustness is related to the  presence of communities for selected examples. 
Robustness is essential for critical infrastructures such as electric power grids. 
The high-voltage transmission grid of Scandinavia (Fig.~\ref{fig:connection_robustness}a,b) has an obvious community structure due to geographic reasons. Finland is only weakly connected to the rest of Scandinavia through two high-voltage transmission lines. 
We simulate the impact of a transmission line failure, which is the biggest threat for large-scale blackouts \cite{Pour06}. 
We use the linear power flow or DC approximation \cite{Wood14,Purc05}, which will be described in detail below. 
The flow changes after the failure generally decay with the distance to the failing transmission line, but we also observe a strong impact of the community boundary. 
Flow changes are strongest in the community where the failure occurs, in this case Sweden. 
They are substantially suppressed in the other community, i.e.~Finland, which also reduces the risk of a global cascade of failures. 

Remarkably, a similar suppression of failure spreading is realized by strong connections. We consider the venation network of a leaf, which includes a strong central vein separating the left and right half (Fig.~\ref{fig:connection_robustness}c,d). The flow of water and nutrients is described by a linear flow network~\cite{Kati10}, which is mathematically equivalent to the linear power flow approximation. If a secondary vein fails, we observe a very similar picture as for the power grid: flow changes generally decay and are strongly suppressed in the other half of the leaf. The central vein itself features large flow changes and thus provides a buffer function. We conclude that weak and strong connections can equally suppress the spreading of failures. We will show that both effects are fully equivalent and that they can be understood in terms of network communities, provided we generalize the definition of communities.

\begin{figure*}[tb]
\begin{center}
\includegraphics[width=1.\textwidth]{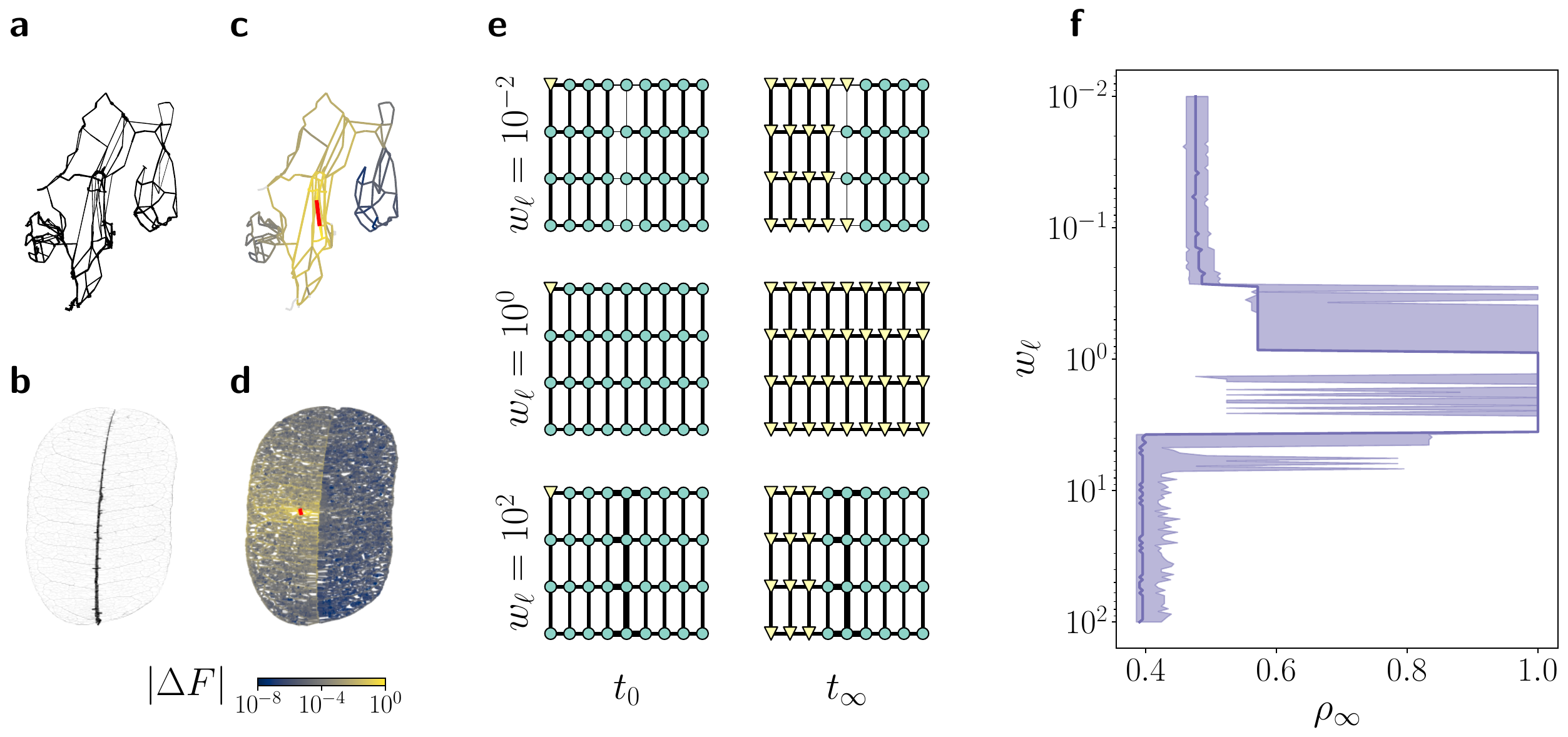}
\end{center}
\caption{\textbf{Different structural patterns separate networks and increase network robustness.} 
\textbf{a} Topology of the Scandinavian power grid, with weak connectivity between different geographic units, in particular Finland.
\textbf{b} Venation network of the leaf Schizolobium amazonicum, with a strong central vein separating the leaf into left and right. The width of the lines in panels (\textbf{a}) and (\textbf{b}) encode the edge weights. 
\textbf{c,d} Flow changes  $|\Delta F|$ after the failure of a single edge (colored in red) for the two networks shown in panels (\textbf{a,b}). The impact of the failure is strongly suppressed in another part of the network, i.e.~in Finland and the right half of the leaf, respectively. This highlights the existence and impact of boundaries that separate the network into communities.
\textbf{e} Simulation of a classic model of global cascades \cite{watts2002simple,nematzadeh_optimal_2014} in a lattice with inhomogenous edge weights. Edges in the middle have weight $w_\ell$ (indicated by thin/thick lines), while all other edges have a weight of  $w_{ij}=1$ (see methods for details). Infected/faulty nodes are shown as yellow triangles, healthy/operational nodes as green circles. 
A global cascade occurs for a homogeneous lattice ($w_\ell=1$), while both weak ($w_\ell=10^{-2}$) and strong connections ($w_\ell=10^2$) stop the cascade from propagating to the right part of the network.
\textbf{f} Final fraction $\rho_\infty$ of nodes that become infected/faulty during the cascade as a function of the weight parameter $w$. 
The line represents the median and the shaded region the 25\% to 75\% quantile for $1000$ random initial conditions. For homogeneous lattices ($w=1$), the cascade reaches all nodes ($\rho_\infty=1$). For both weak ($w \ll 1$) and strong ($w \gg 1$) conditions, the cascade stops at the boundary such that $\rho_\infty \approx 0.5$
}
\label{fig:connection_robustness}
\end{figure*}

Before we move to a more detailed analysis, we demonstrate the generality of the observed phenomena. 
We consider a classic model of network cascades \cite{watts2002simple,nematzadeh_optimal_2014}. 
Nodes are either healthy/operational (state $0$) or infected/faulty (state $1$). 
A node $i$ gets infected or faulty if the weighted average over all neighbors' states exceeds a certain threshold $\phi_i$. 
Starting from a small amount of nodes in state $1$, a cascade may emerge depending on the values $\phi_i$ and the structure of the network. 
As before, we consider networks which can be separated into two parts by either weak or strong connections (Fig.~\ref{fig:connection_robustness}e,f). 
More precisely, we consider a lattice where the edges in the middle region have a tunable weight $w$, while all other edges have weight one (see Methods for details). 
We observe that a homogeneous network ($w=1$) always leads to a global cascade, where all nodes are infected or faulty in the final state. 
A boundary, either by weak ($w\ll1$) or strong ($w\gg1$) connections, effectively stops the cascade. Only those nodes are infected, which are located in the same half as the initially infected ones.  

\subsection*{Flow networks and dual communities}

We have shown that strong connections can divide a network and enhance its robustness similarly as weak connections do. Even more, we can establish a full mathematical equivalence of weak and strong connections in the case of flow networks. 
This equivalence leads to a generalization of the definition of community structures in complex networks.   

Linear flow networks arise in a variety of applications, including electric circuits~\cite{bollobas1998,Dorfler2018}, power grids~\cite{Wood14,Purc05},  hydraulic networks~\cite{Hwan96,diaz2016}, and vascular networks of plants~\cite{Kati10}. In these networks, the flow from node $i$ to node $j$  is given by $F_{i \rightarrow j} = w_{ij} \cdot (\theta_i - \theta_j)$, where $w_{ij}$ is the connectivity or conductivity of the edge $(i,j)$. The nodal variable $\theta_i$ describes the local voltage or potential in an electric circuit, the voltage phase angle in a power grid, or the pressure in a hydraulic or vascular network. The flows have to satisfy the continuity equation (or Kirchhoff's current law, KCL) at every node $i$ of the network, $\sum_j F_{i \rightarrow j} = P_i$, where $P_i$ is the inflow to the network.

These equations can be recast in a compact matrix notation. Let $N$ denote the number of nodes and $M$ the number of edges the network, which we assume to be connected. We fix an orientation for each edge to keep track of the direction of flows and define the edge-node incidence matrix $\matr{I} \in \mathbb{R}^{M \times N}$ as
\begin{equation}
  I_{\ell n} = \left\{
   \begin{array}{r l}
     +1 & \; \text{if edge $\ell$ starts at node $n$},  \\
     - 1 & \; \text{if edge $\ell$ ends at node $n$},  \\
      0     & \; \text{otherwise}.
  \end{array}
  \right.
  \label{eq:incidence_matrix} 
\end{equation}
The edge weights are summarized in a diagonal matrix $\matr{W} = {\rm diag}(w_1,\ldots,w_M)$ while all other quantities are summarized in vectors $\vec \theta = (\theta_1, \ldots, \theta_N)^\top$, $\vec P = (P_1, \ldots, P_N)^\top$, $\vec F = (F_1, \ldots, F_M)^\top$. 
Note, the ordering of the edges in the edge-node incidence matrix $\matr{I}$ and the weights in the diagonal matrix $\matr{W}$ have to be consistent such that the weight of edge $k$ connecting nodes $i$ and $j$ is given by $w_k = w_{ij}$.
Then the relation of flows and potentials is given by $\vec F = \matr{W} \matr{I} \vec \theta$ and Kirchhoff's current law reads
\begin{equation}
    \vec P = \matr{I}^\top \vec F =
    \underbrace{\matr{I}^\top \matr{W} \matr{I} }_{=: \matr{L}} \, \vec \theta \, .
    \label{eq:Poisson-primal}
\end{equation}
Equation \eqref{eq:Poisson-primal} is a discrete Poisson equation that determines the potential $\vec \theta$ up to an irrelevant additive constant. The matrix $\matr{L} \in \mathbb{R}^{N \times N}$ is nothing but the well known graph Laplacian with components
\begin{equation}
    L_{ij}   =\left\{\begin{array}{l l }
      -w_{ij} & \; \mbox{if $i$ is connected to $j$},  \\
      \sum_{k} w_{ik} & \; \mbox{if $i=j$},  \\
      0     & \; \mbox{otherwise}.
  \end{array} \right. \label{eq:Laplacian}
\end{equation}
The Laplacian is a central object in spectral graph bisection \cite{fortunato_community_2010}, a classic method of community detection, which will be further elaborated below.

The above description focuses on the nodes of the network, with the nodal potentials $\vec \theta$ being the central quantity of interest. An equivalent description exists that focuses on the edges of the network and the flows $\vec F$.
The starting point is the KCL $\matr{I}^\top \vec F = \vec P$. This linear set of equations is underdetermined in terms of $\vec F$, such that the general solution can be written as the sum of a particular solution and an arbitrary solution of the associated homogeneous equation, namely
\begin{equation}
    \vec F = \vec F_{\rm part} + \vec F_{\rm hom} \, .
    \label{eq:KCL-general-solution}
\end{equation}
The vector $\vec F_{\rm hom}$ describes a flow without sources or sinks, that is, a collection of cycle flows. 
The cycle flows form a vector space (the cycle space) such that we can expand each cycle flow into a suitable basis. 
A distinguished basis exists for plane graphs, i.e.~graphs embedded in the plane, which can be constructed in the following way.
A face of a plane graph is a region that is bounded by edges, but contains no edges in the interior. 
The boundary edges of each face then provide one basis vector of the cycle space. 
Further details are given in the supplementary information.

To keep track of the basis, we introduce the cycle edge incidence matrix $\matr{C} \in \mathbb{R}^{M \times (M-N+1)}$ with components
\begin{equation}
  C_{\ell c}= \left\{
   \begin{array}{r l }
     +1 & \; \mbox{if edge $\ell$ is part of face $c$},  \\
     -1 & \; \mbox{if reversed edge $\ell$ is part of face $c$},  \\
      0     & \; \mbox{otherwise}.
   \end{array} \right. 
\end{equation}
Then we can write the general solution of the KCL as 
\begin{equation}
    \vec F = \vec F_{\rm part} + \matr{C} \, \vec f
    \label{eq:F-Cf}
\end{equation}
with an arbitrary cycle flow vector $\vec f$. 
The actual values of the cycle flows are then determined by Kirchhoff's voltage law (KVL), which states that the potential differences around any closed cycle sum up to zero. 
In fact it is sufficient to enforce this for the $M-N+1$ basis cycles. We can thus formulate the KVL in terms of the flow vector $\vec F$ as
\begin{equation}
    \matr{C}^\top \matr{W}^{-1} \vec F = \vec 0.
\end{equation}
Crucially, this equation includes the matrix $\matr{W}^{-1}$ which translates flows into potential differences. 
Substituting Eq.~\eqref{eq:F-Cf} then yields 
\begin{equation}
    \vec Q = \underbrace{\matr{C}^\top \matr{W}^{-1} \matr{C}}_{=\matr{L}^*} \, \vec f   \, .
    \label{eq:Poisson-dual}
\end{equation}
Notably, this equation has the same mathematical structure as Eq.~\eqref{eq:Poisson-primal}: It is a discrete Poisson equation with a Laplacian matrix $\matr{L}^*$ and a source term $\vec Q = -\matr{C}^\top \matr{W}^{-1} \vec F_{\rm part}$. However, the Laplacian $\matr{L}^*$ is not defined on the original primal graph, but on the dual graph. The vertices of this dual graph are given by the faces of the primal graph, while two nodes of the dual graph are connected by a dual edge if the corresponding faces share an edge.

Comparing the Laplacian of the primal graph $\matr{L}=\matr{I}^\top \matr{W} \matr{I}$ to that of  the dual graph $\matr{L}^*=\matr{C}^\top \matr{W}^{-1} \matr{C}$, we see another essential aspect of graph duality: The weights of the dual edges are inverse to the weights of the primal edges. More precisely, we find the dual weights
\begin{equation}
    w^*_{c,d} = \sum_{\ell \in c,d} \frac{1}{w_\ell}
    \label{eq:dual-weight}
\end{equation}
of the edge that connects the nodes $c$ and $d$ in the dual graph corresponding to faces $c$ and $d$ that share the edge $\ell$ in the primal graph. 
This relation shows most clearly why weak and strong connections can both affect the robustness and the community structure of a network. 
Strong connections in the primal correspond to weak connections in the dual and vice versa. Similarly, a strong central vein in the primal corresponds to weak connections in the dual and thus to a pronounced community structure.

We have now introduced all mathematical tools to identify dual communities in planar complex networks. Starting from the primal network, we identify all faces and define the dual graph with weights given by Eq.\eqref{eq:dual-weight}. Then dual communities can be extracted by means of any standard community detection algorithm.

\begin{figure*}[tb]
\begin{center}
\includegraphics[width=1.\textwidth]{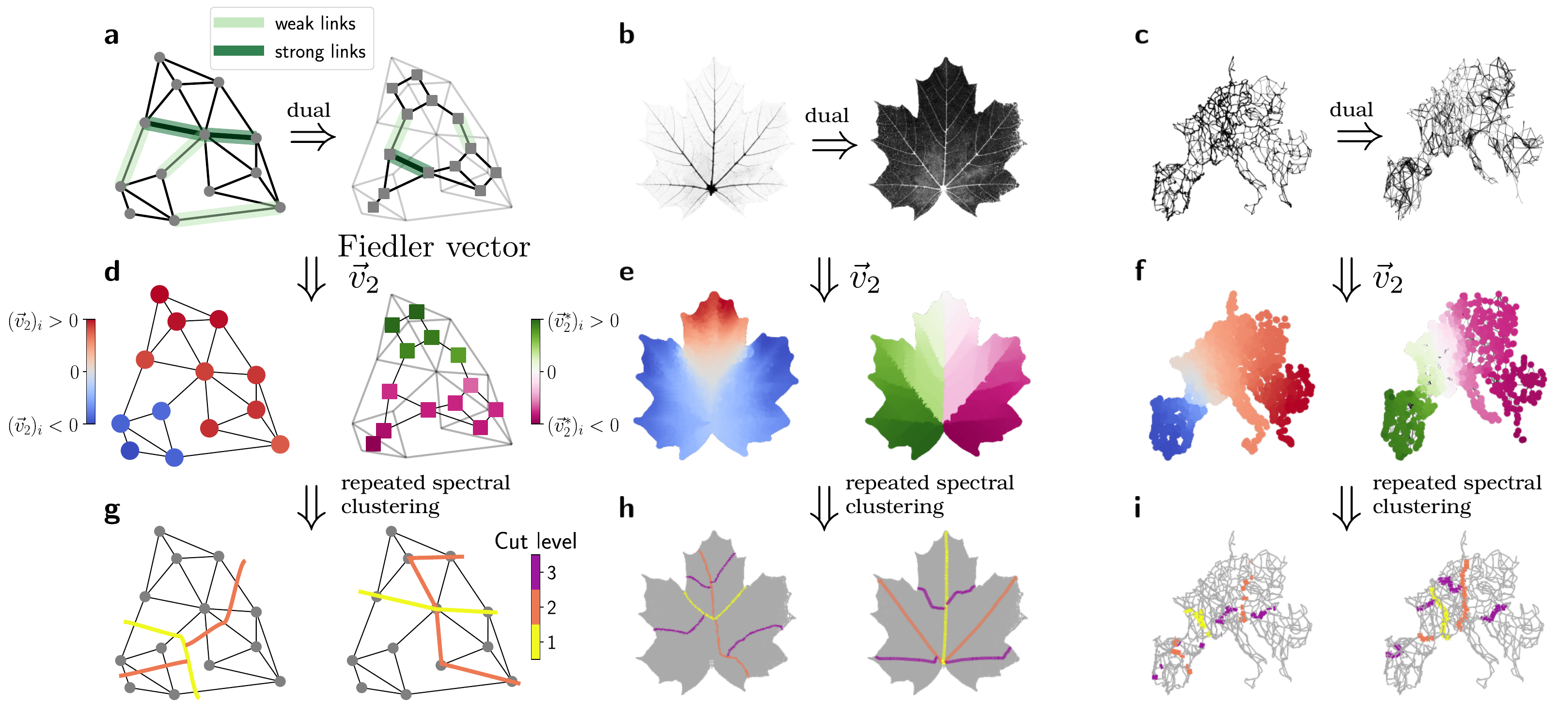}
\end{center}
\caption{\textbf{Primal and dual communities and hierarchies in spatial networks.} 
\textbf{a} A plane graph with edges characterized by either large (dark green), small (light green) or intermediate edge weights, and its associated dual graph. 
The dual graph is constructed by transforming each face of the graph into a node of the dual, and adding dual edges whenever two faces share an edge. Notice that an edge with large weight shared by two faces will imply a weak link between the two corresponding nodes in the dual graph (see Eq.~\eqref{eq:dual-weight}). 
\textbf{d} Spectral clustering by means of the Fiedler vector $\vec{v}_2$ reveals the community structure in both the graph (left) and its dual (right). 
\textbf{g} Based on repeated spectral clustering, the graphs are further decomposed into a hierarchy of smaller sub-units which is different in the graph (left) and its (dual). 
\textbf{b,e,h} If we perform the same analysis on the venation network of a leaf of \textit{Acer platanoides}, a decomposition of the original graph does not provide useful information (\textbf{e,h},left). 
A decomposition of the dual graph, however, reveals the hierarchical organization and the functional units of the venation network (\textbf{e,h}, right). 
\textbf{c,f,i} Applying the same procedure to the Central European power grid,  (\textbf{i},left) and its dual (\textbf{i},right) show that primal and dual hierarchies provide different, but equally useful information about the network structure.
We may conclude that this network represents an intermediate case between primal and dual community structure (see main text for details).
}
\label{fig:power-and-leaf}
\end{figure*}

In the following, we focus on spectral graph bisection because of its direct link to the graph Laplacian -- which is the central object in the above analysis. This method relies on the fact that the community structure is encoded in the second smallest eigenvalue of the graph Laplacian $\lambda_2\geq 0$, known as the algebraic connectivity or Fiedler value, which vanishes if the graph consists of two disconnected components and increases with increasing connectivity between the communities.
The graph nodes are then assigned to one of two communities based on the corresponding eigenvector $\vec{v}_2$, the Fiedler vector: two vertices $j$ and $i$ are in the same community if they share the same sign of the Fiedler vector~\cite{newman_modularity_2006}.
\begin{equation}
  \operatorname{sign}((\vec{v}_2)_i-h)=\operatorname{sign}((\vec{v}_2)_j-h),
\end{equation}
where $h\in \mathbb{R}$ is a threshold parameter. Here, we choose $h=0$. 

This method can be straightforwardly applied to the dual graph, replacing the primal Laplacian $\matr{L}$ by its dual counterpart $\matr{L}^*$. The algebraic connectivity of the dual is measured by the second eigenvalue $\lambda_2^*$ and the associated eigenvector is used to identify the dual communities. 
We find that dual communities appear naturally in real-world networks such as the venation networks of leaves (Fig.~\ref{fig:power-and-leaf}b,e). In the following, we discuss essential properties of dual communities, in particular their relation to hierarchical structures, and provide a thorough analysis of the dual algebraic connectivity $\lambda_2^*$. We stress that other community detection methods can be applied to the dual graph equally well and yield comparable results (cf.~Supplementary Figure~S6). 

\subsection*{Dual communities reveal hierarchical organization of supply networks} \label{sec:hierarchies} 

The spectral clustering method for community detection can be applied to both the primal and the dual graph, revealing different structural information about the network (Figure~\ref{fig:power-and-leaf}). 
Furthermore, we can use this approach to extract a network's hierarchical organization as follows. 
Starting from the initial network, we compute the Fiedler vector, identify the communities and then split the network into two parts at the resulting boundary by removing all edges between the communities. Then we iterate the procedure starting from the subgraphs obtained in the previous step. Repeated application of this procedure reveals different boundaries and thus different hierarchies in the primal graph and in its dual (Figure~\ref{fig:power-and-leaf}g, see Methods for details).

Leaf venation networks are archetypal examples of hierarchically organized networks, with a thick primary vein in the middle and medium secondary veins, that supply thin subordinated veins (Figure~\ref{fig:power-and-leaf}b). 
The thick veins separate the network into distinct parts -- for instance the left and right half separated by the primary vein. 
This characteristic organization is clearly revealed by dual community detection. 
Spectral graph bisection identifies the primary vein that separates the left and the right half of the leaf. 
Repeating the bisection then shows that this organizational pattern repeats in a hierarchical order: dual communities are split by secondary veins in a repeated manner (Fig.~\ref{fig:power-and-leaf}h). 
Remarkably, an analogue decomposition in the original primal graph does not provide any useful information on the network organization.

We conclude that leaf venation networks clearly display a dual community structure, where the boundary of the dual communities coincide with the primary and secondary veins. Hence, dual community detection allows to identify hierarchical organization patterns in complex networks. We will provide a more formal treatment of the relation between strong veins and dual communities below.

As a second example, we now turn to another type of spatially embedded supply networks: power grids. Figure~\ref{fig:power-and-leaf}c shows the European power transmission grid and its dual graph. Again, a hierarchical decomposition reveals different levels of hierarchies in the grid that correspond to its functional components. These components may also be interpreted geographically: the mountain ranges such as the Pyrenees or the Alps as well as the former Iron Curtain are clearly visible in the decomposition of the primal graph. Remarkably, both primal and dual decompositions provide useful structural information here. In particular, there is a dual community boundary at cut level three that spans Hungary and the border region between Slovenia and Croatia and closely corresponds to a weak spot in the European power system, where the grid was split into two mutually asynchronous fragments on January, 8th, 2021 \cite{entsoe2021a}. Another split occurred on the 24th of July between Spain and France -- where both the primal and the dual decomposition detect a community boundary \cite{entsoe2021b}.

Although mathematically similar~\cite{gavrilchenko_resilience_2019,Kaiser2019,kaiser_optimal_2020}, the two types of networks we studied display different structural hierarchies and communities. Whereas leaf venation networks are evolutionarily optimized, the structure of power grids depends strongly on historical aspects and their ongoing transition to include a higher share of renewable energy sources. This transition aspect also manifests in their community structure, as we will see further below.

\subsection*{Connectivity and the geometry of community boundaries}

\begin{figure}[tb]
\begin{center}
\includegraphics[width=0.8\columnwidth]{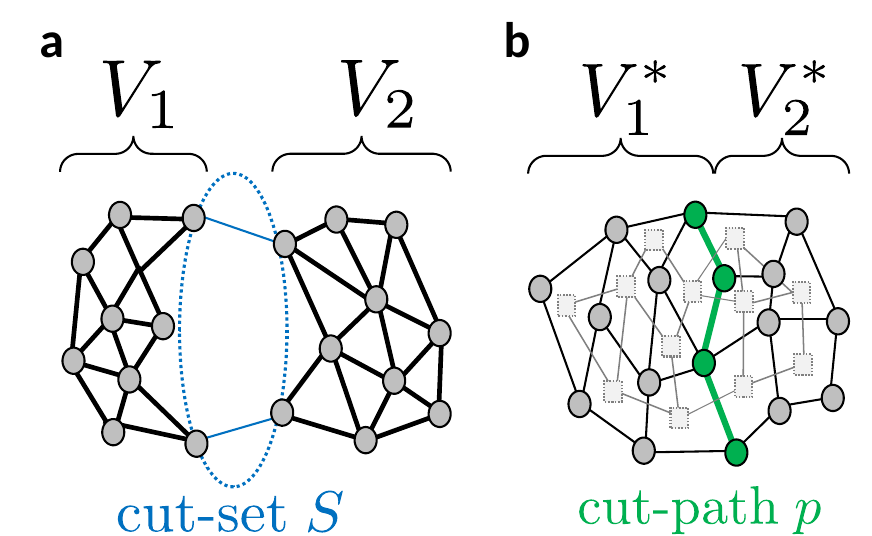}
\end{center}
\caption{\textbf{Geometry of primal and dual community boundaries.}
\textbf{a} Decomposition of a graph into two primal communities. Some edges belong to neither of the two communities, but provide a weak connection between the communities. These edges constitute the cut-set $S$.
\textbf{b} A graph (black solid lines) and its dual (grey dashed lines), which is decomposed into two communities $V_{1,2}^{*}$. 
Some primal edges, colored in green, belong to faces from both communities. These primal edges, together with their terminal nodes, constitute the cut-path $p$.
}
\label{fig:geometry}
\end{figure}

The algebraic connectivity $\lambda_2$ of a graph is closely related to its topological connectivity -- the amount of connectivity between the two communities \cite{Fiedler1973}. For a weighted graph, one can derive the upper bound (see Supplementary Note~2)
\begin{equation}
   \lambda_2 \le \mu_2 = \frac{N_1 + N_2}{N_1 N_2} \sum_{\ell \in S} w_\ell \, .
   \label{eq:lambda-bound-primal}
\end{equation}
where $N_{1}$ and $N_2$ respectively count the number of nodes in the two communities.
The set $S$ contains all edges which are not within one of the two communities but in-between, providing a weak connection of the communities (Fig.~\ref{fig:geometry}a). This set is referred to as a cut-set: If all edges in $S$ are removed, the graph is cut into the two communities. Notably, the bound becomes becomes exact approximation in the limit of vanishing connectivity ($\mu_2 \rightarrow 0$) as shown in the supplementary information.

We derive an analogous bound for dual communities, transferring geometric concepts from the primal to the dual. In particular, we derive an analog to the cut-set $S$, which contains all edges, which are elements of neither of the two components.
Consider a decomposition of the dual graph $G^* = (V^*,E^*)$, where the dual vertex set $V^*$ is separated into two components $V_1^*$ and $V_2^*$. Two faces $c \in V_1^*$ and $d \in V_2^*$ are connected in the dual, if they share at least one edge in the primal graph. Hence, we will find a set of primal edges which belong to both of the two components (Fig.~\ref{fig:geometry}b). These primal edges, together with their terminal vertices, constitutes a path $p$ in the primal graph. In the following, we will refer to $p$ as a cut-path as its removal disconnects the graph. 
The edges along the cut-path essentially determine the community structure of the dual graph and its algebraic connectivity. Given a cut-path $p$, we find the bound
\begin{equation}
   \lambda^*_2 \le \mu_2^* = \frac{N^*_1 + N^*_2}{N^*_1 N^*_2} \sum_{\ell \in p} \frac{1}{w_\ell} \, ,
      \label{eq:lambda-bound-dual}
\end{equation}
where $N^*_{1,2} = |V^*_{1,2}|$ counts the number of nodes in the dual communities. Notably, the expression $\mu_2^*$ does not only provide an upper bound for the algebraic connectivity $\lambda_2^*$, but an approximation that becomes exact in the limit of vanishing dual connectivity. We prove these statements rigorously in the supplementary information.  

\begin{figure*}[tb]
\begin{center}
  \includegraphics[width=1.0\textwidth]{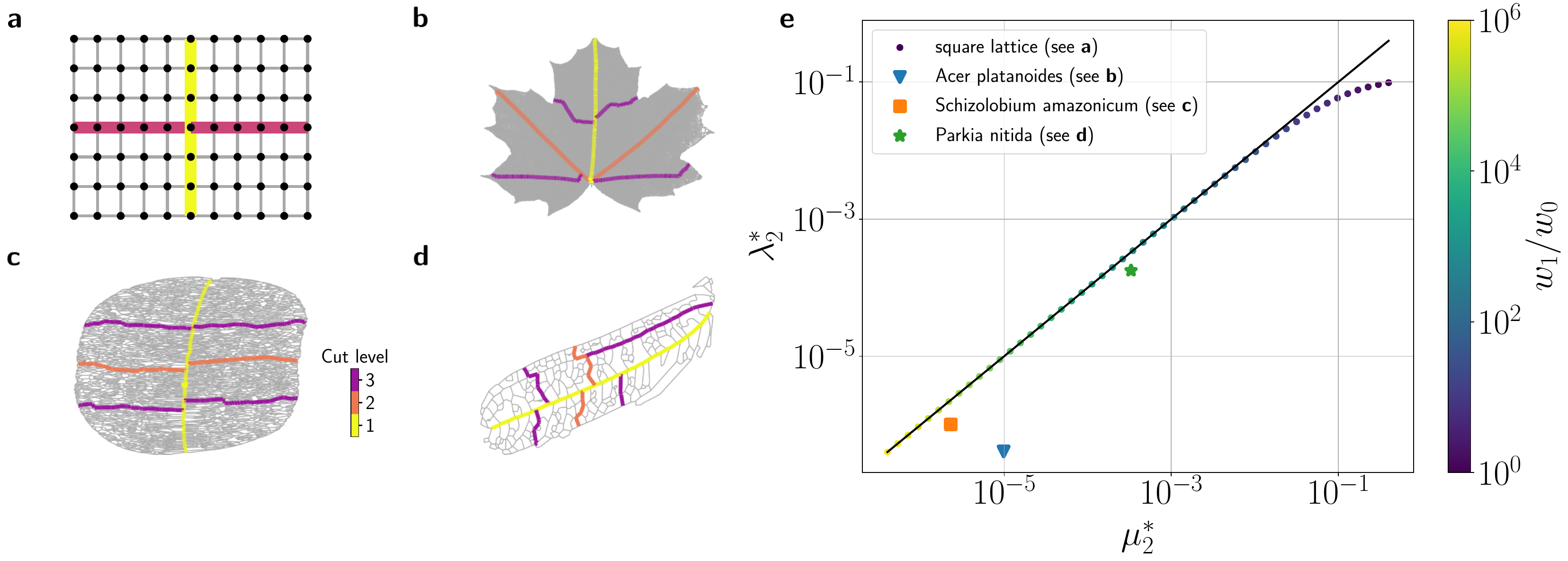}
\end{center}
\caption{\textbf{Algebraic and Topological Connectivity in the dual graph of synthetic and real-world networks.}
\textbf{a} A square lattice with a tunable dual community structure. The edge weights along the central vein $w_1$ are stronger than the weights $w_0$ of the remaining edges. The ratio $w_1/w_0$ can be tuned to test the relation of algebraic and topological connectivity given by  Eq.~\eqref{eq:lambda-bound-dual}. 
\textbf{b-d} Hierarchical organization in leaf venation networks of three different species revealed by repeated graph bisection of the dual graph.
The dual community boundary (yellow) constitutes a cut-path $p$.
\textbf{e} Comparison of the dual algebraic connectivity $\lambda^*_2$ and the dual topological connectivity 
$\mu_2^*$ defined in Eq.~\eqref{eq:lambda-bound-dual}.
Circles refer to the case of the square lattice with the color of the circles indicating the fraction $w_1 / w_0$.
The topological connectivity $\mu_2^*$ provides a rigorous upper bound for $\lambda_2^*$, but also a good approximation for large parameter regions. 
This correspondence shows how dual communities are decomposed by a strong connectivity along the boundary.
\label{fig:connectivity}
}
\end{figure*}
 
The relation of cut-paths and dual communities is further investigated in Fig.~\ref{fig:connectivity} for both synthetic networks and leaf venation networks.
We first consider a square lattice with a tunable dual community structure: The edges in the central vein have a higher weight $w_1$ that the remaining edges $w_0$. We find that the dual algebraic and topological connectivity $\lambda_2^*$ and $\mu_2^*$ become virtually indistinguishable for $w_1/w_0 \apprge 10^{2}$.
In venation networks, the boundaries between the dual communities, i.e.~the cut-paths, correspond to the primal and secondary veins as described above. A good agreement between $\mu_2^*$ and $\lambda^*_2$ is found especially for the two smaller venation networks from the Parkia and Schizolobium family. 
This result further emphasizes the intimate relation of dual communities and hierarchical organization in complex networks.

\subsection*{Why do primal and dual communities emerge?} 

\begin{figure*}[ht!]
\begin{center}
  \includegraphics[width=.85\textwidth]{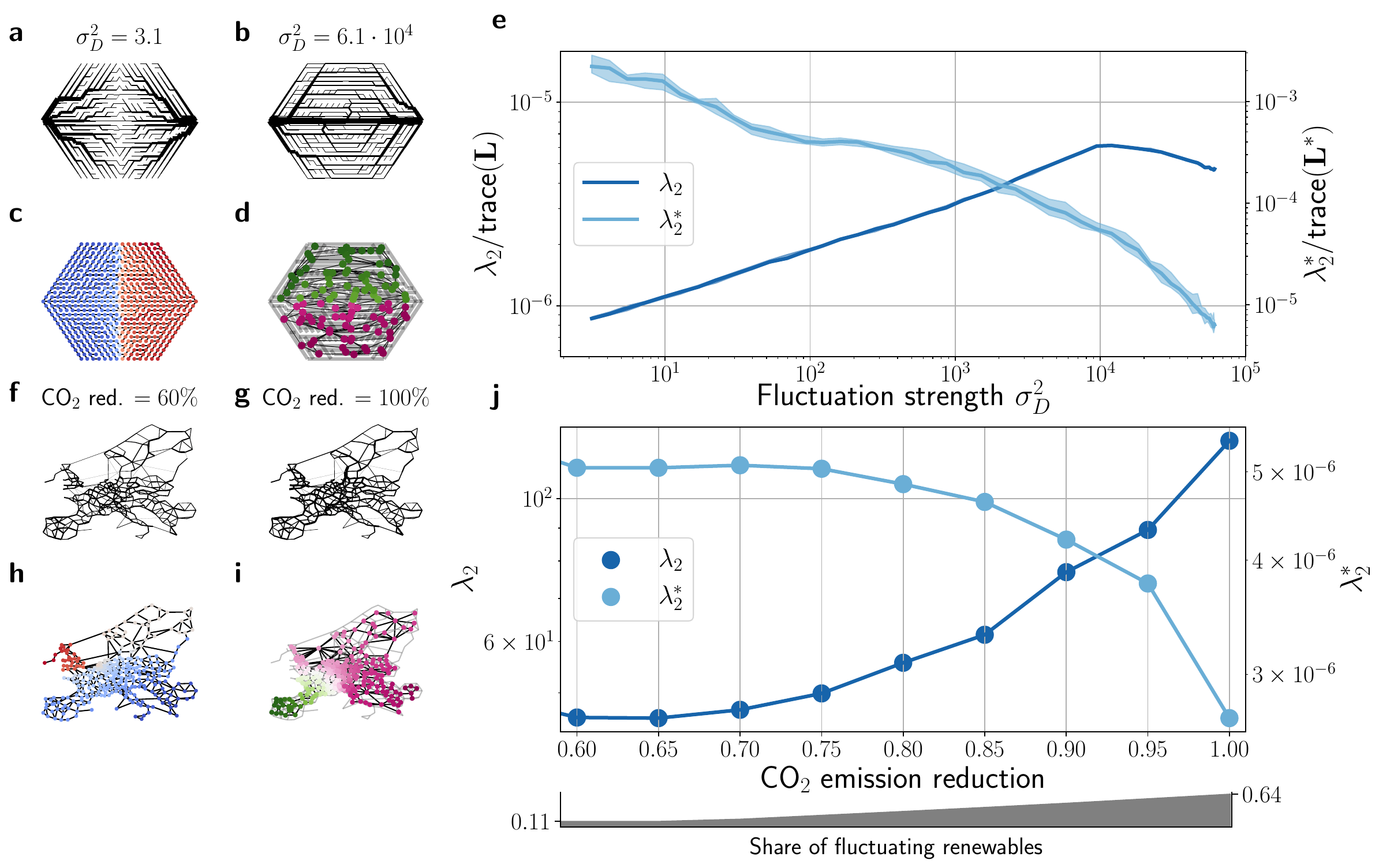}
\end{center}
\caption{\textbf{Primal and dual communities emerge naturally in optimal supply networks.}
\textbf{a-b} We consider a triangular lattice with two fluctuating sources located at the leftmost and the rightmost node and Gaussian sinks attached to all other nodes. The strength of the two sources fluctuates following a Dirichlet distribution with variance $\sigma_D^2$ (see methods). For each value of $\sigma_D^2$, we find the network structure and the edge weights that minimize the total dissipated energy $D$ assuming limited resources. 
\textbf{c-d} The optimal network structure shows a transition from primal to dual communities measured by the Fiedler vectors (colour code) of  the primal graph (panel \textbf{c}) or dual graph (panel \textbf{d}).
\textbf{e} The scaling of the corresponding primal ($\lambda_2$, dark blue) and dual ($\lambda_2^*$, light blue) Fiedler value confirms the transition. The shaded regions indicate the 25\% to 75\% quantiles for different runs.
\textbf{f-g} A transition from primal to dual community structure is also observed in optimization models of the European power grid when generation shifts to fluctuating renewables. The figure shows the  optimal network structures and transmission line capacities in a cost optimal system for different carbon-dioxide (CO$_2$) emission reduction levels. 
\textbf{h-i} Primal and dual communities are identified  by the Fiedler vectors (colour code) of (\textbf{h}) the primal or (\textbf{i}) dual graph.
\textbf{j} The transition is confirmed by the scaling of primal and dual Fiedler values for increasing emission reduction which corresponds to an increasing share of fluctuating renewables (lower axis). 
Simulations were performed with the high-resolution European energy system model 'PyPSA-EUR'~\cite{horsch_2018}, that optimizes the investments and operations of generation and transmission infrastructures for minimum system costs.
\label{fig:Optimal_supply_networks} 
}
\end{figure*}

Understanding how the structure of optimal supply networks emerges is an important aspect of complex networks research~\cite{Kati10,Corson2010,bohn_structure_2007,dodds2010optimal}.
In cases where a single source supplies the entire network, it is well established that fluctuations in the supply can cause a transition from a tree-like topology to a structure with loops~\cite{Kati10,Corson2010,kaiser_optimal_2020}. We extend this result by studying how the increase in fluctuations influences the optimal network structure in supply networks with multiple strongly fluctuating sources and weakly fluctuating sinks. This design is highly relevant for many real-world applications, e.g. when considering a power grid that is based on decentralized renewable energy sources that fluctuate more than conventional carriers.

To interpolate between strongly fluctuating sources and weakly fluctuating ones, we first use a similar model as in Ref.~\cite{Corson2010}. We consider a linear flow network consisting of a triangular lattice with $N$ nodes of which $N_s$ are sources and $N-N_s$ are sinks whose outflows are fluctuating iid Gaussian random variables. Additionally, we add fluctuations only to the sources of the networks that can be tuned by the additional variance $\sigma_D^2$ (see Methods). 
We then compute the optimal structure and edge weights of the network that minimize the total dissipated energy $D = \sum_\ell \langle F_\ell^2 \rangle/w_\ell$ averaged over the fluctuating in- and outflows. Resources for building the network are assumed to be limited, which translates into the constraint $\sum_e w_e^\gamma \le 1$. The cost parameter $\gamma$ quantifies how expensive the increase of an edge weight is and was set to $\gamma=0.9$ for the examples presented in this manuscript (see Supplementary Note~4 for more information).
Results for $N_s=2$ sources are shown in Fig.~\ref{fig:Optimal_supply_networks}, and further results for $N_s=3$ are provided in the supplementary information.

We find that the optimal network structure changes strongly as the fluctuations increase. For weak fluctuations, $\sigma_D^2\approx 1$, each of the $N_s$ sources supplies the surrounding area of the network. Only weak connections are established between the areas to cope with the small residual imbalances. Hence, the optimal networks show a pronounced primal community structure see Fig.~\ref{fig:Optimal_supply_networks}a).

For strong fluctuations, $\sigma_D^2\gg 1$, a local area supply is no longer possible and long-distance connectivity is required. Remarkably, this connectivity is established in one central vein that links the two fluctuating sources see Fig.~\ref{fig:Optimal_supply_networks}b). 
As a consequence, the optimal networks show a pronounced dual community structure similar to leaf venation networks. 
We can capture the transition from a primal to a dual community structure in terms of the primal and dual Fiedler values (Fig.~\ref{fig:Optimal_supply_networks}e). 
Increasing $\sigma_D^2$, we observe a smooth crossover from primal communities with $\lambda_2 \rightarrow 0$ to dual communities with $\lambda^*_2 \rightarrow 0$. We note that a similar picture is found if the Fiedler values $\lambda_2$ and $\lambda_2^*$ are replaced by another measure such as the modularity (see Supplementary Figure~S1). 
We conclude that optimal supply networks typically have a community structure -- whether it is primal or dual depends on the degree of fluctuations.

Strikingly, an analogous transition is observed for actual power transmission grids when optimizing the network structure for different levels of fluctuating renewable energy sources. We consider the European power transmission grid and optimize its network structure for different carbon dioxide (CO$_2$) emission reduction targets compared to the year 1990 ranging from $60\%$ to $100\%$ reduction using the open energy system model 'PyPSA-Eur'~\cite{horsch_2018} (see Methods for details). In Supplementary Figures~S3 and S4 we illustrate how the generation mix in the optimized power system changes for different emission scenarios from conventionally-dominated grids to highly-renewable grids. 

We find that the decarbonization of power generation drives a transition from primal to dual communities in the grid. A reduction in generation-based CO$_2$ emissions corresponds to an increased share of power being produced by fluctuating renewable energy sources. With increasing penetration of fluctuating renewables, we observe a decrease in the dual Fiedler value $\lambda_2^*$ and an increase in the primal Fiedler value $\lambda_2$, which indicates a transition from primal to dual communities in the optimized networks (Fig.~\ref{fig:Optimal_supply_networks}j). 
Hence, the primal-dual transition emerges both in fundamental models and in realistic high-resolutions simulations of spatial networks.

\subsection*{How do primal and dual communities determine network robustness?}

\begin{figure*}[tb]
\begin{center}
  \includegraphics[width=1.0\textwidth]{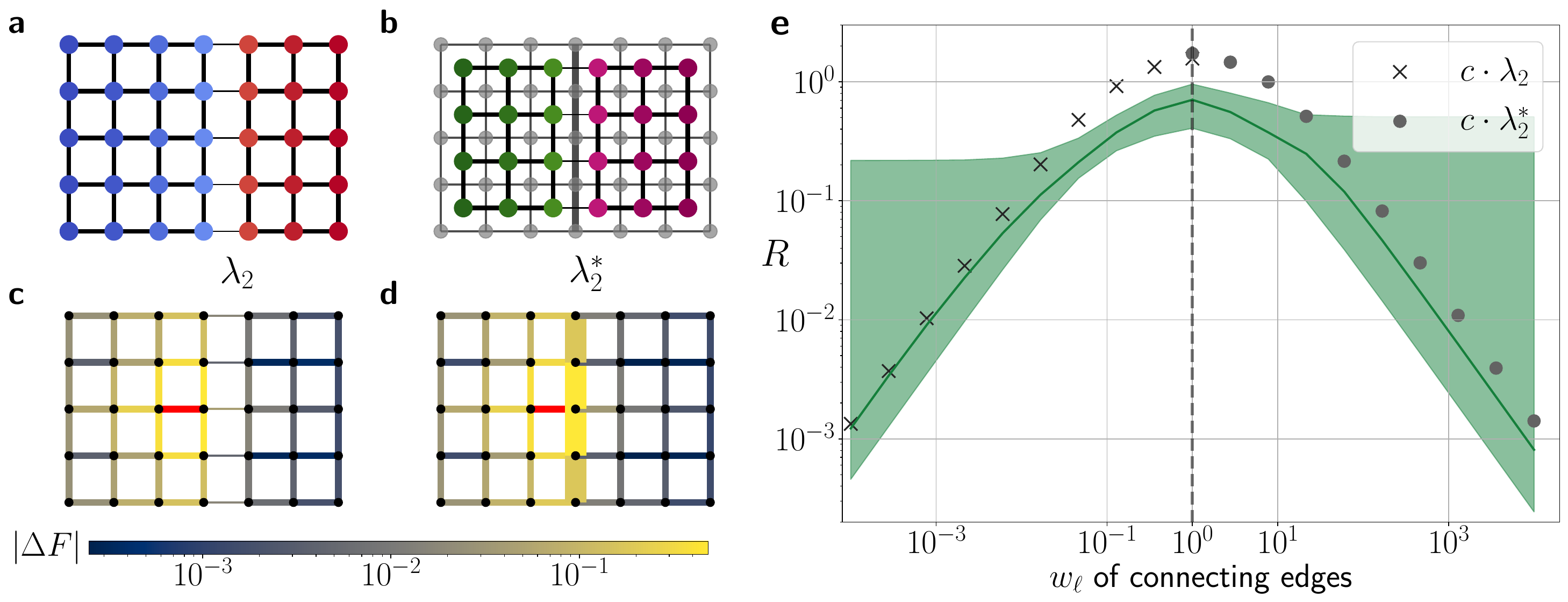}
\end{center}
\caption{\textbf{Primal and dual communities inhibit failure spreading.} 
\textbf{a,b}
A square grid is divided into (\textbf{a}) two primal communities by weakening the central horizontal edges or (\textbf{b}) into two dual communities by strengthening the central vertical edges. The Fiedler vector (colour code) reveals the community structure. 
\textbf{c,d}
Both primal and dual communities inhibit flow changes $|\Delta F|$ (colour coded) in the other community after the failure of a single edge (red) with unit flow in a given community. 
\textbf{e} We interpolate between primal and dual communities in a square grid of size $21\times10$ by tuning the weight $w_e$ of the horizontal edges or vertical edges (see a,b). The flow ratio $R$ reveals that failure spreading to the other community is largest for $w_\ell=1$. It decays for either type of community as measured by primal and dual Fiedler values $\lambda_2$ (crosses) and $\lambda_2^*$ (circles), respectively. The green line represents the median value and the shaded regions indicate the $25\%$ and $75\%$ quantiles. 
The symbol $c$ denotes a normalization constant used to improve comparison with $R$. 
\label{fig:LODF_ratio_SG}
}
\end{figure*}

Primal and dual communities both impede the spreading of failures and thus improve the robustness of complex networks as shown in Fig.~\ref{fig:connection_robustness}. We will now provide a more detailed and quantitative analysis of this connection for two important systems: flow networks and coupled oscillator networks.

We first consider linear flow networks using the theoretical framework introduced above. 
Robustness is quantified by a sensitivity factor,  measuring the response of the network flows $\vec F$ to a perturbation. 
As a perturbation, we add an inflow $\Delta P$ at a node $v_1$ and an outflow of the same amount at another node $v_2$. 
Here, we focus on the case where $v_1$ and $v_2$ are the two end nodes of an edge $e=(v_1,v_2)$ and treat the general case in the supplementary information. 
The source vector in the Poisson equation \eqref{eq:Poisson-primal} then changes as
\begin{equation}
    \vec P \rightarrow \vec P' = \vec P + \Delta P \, \matr{I}^\top \vec l_e 
\end{equation}
and $\vec{l}_e\in \mathbb{Z}^M$ is the indicator function for edge $e$, which is equal to one at the positions indicated by the subscript and zero otherwise. 
Inverting the discrete Poisson equation \eqref{eq:Poisson-primal}, we then find that the network flows change by the amount
\begin{equation}
    \Delta \vec F = \Delta P \, \matr{W} \matr{I} \matr{L}^\dagger \matr{I}^\top \vec l_e   ,
\end{equation}
where $\matr{L}^\dagger$ is the Moore-Penrose pseudoinverse of the primal graph Laplacian. We then define a sensitivity factor as the ratio of the flow change at edge $\ell$ and the perturbation strength $\Delta P$ as~\cite{Guo09,strake2018}
\begin{equation}
    \eta_{v_1,v_2,\ell} = \frac{\Delta F_\ell}{\Delta P} = 
    w_\ell \vec l_\ell^\top  \matr{I} \matr{L}^\dagger \matr{I}^\top \vec l_e.
    \label{eq:primal_ptdf}
\end{equation}
We note that the sensitivity factor is widely used in the context of power system security analysis, where it is referred to as a power transfer distribution factor~\cite{Guo09,strake2018}. Importantly, the sensitivity factor may also be used to simulate the failure of an edge $e=(v_1,v_2)$ by choosing the inflow $\Delta P$ accordingly (see supplementary information).

The sensitivity factor $\eta_{v_1, v_2, \ell}$ elucidates the relation between primal communities and network robustness~\cite{Manik2017}. In the supplementary information, we treat the limiting case of vanishing connectivity between the communities and show the following: If the edges $e$ and $\ell$ are in different communities, $\eta$ vanishes in the same way as the Fiedler value $\lambda_2$. If the edges $e$ and $\ell$ are in the same community, $\eta$ remains finite as $\lambda_2 \rightarrow 0$.

Remarkably, we can find an analogous description in the dual graph~\cite{ronellenfitsch_dual_2017,17lodf}. 
In equation \eqref{eq:KCL-general-solution}, we choose the particular solution as $\Delta F_{\rm part} = \Delta P \vec l_e$. 
We can then compute the cycle flows $\vec f$ from equation \eqref{eq:Poisson-dual} and substitute the result into equation \eqref{eq:F-Cf} to obtain the change of network flows~\cite{ronellenfitsch_dual_2017,17lodf}
\begin{equation}
    \Delta \vec F = - \Delta P \, \matr{C}  (\matr{L}^*)^\dagger \matr{C}^\top \matr{W}^{-1} \vec l_e  + \Delta P \, \vec l_e.
\end{equation}
The sensitivity factor for all edges $\ell \neq e$ thus reads
\begin{equation}
 \eta_{v_1,v_2,\ell} =  - \frac{1}{w_e} \vec{l}_\ell^\top\matr{C}(\matr{L}^*)^\dagger \matr{C}^{\top} \vec{l}_e.
  \label{eq:ptdf_dual}
\end{equation}
We see that the dual Laplacian $\matr{L}^*$ contributes to the sensitivity factor $\eta_{v_1,v_2,\ell}$ in exactly the same way as the primal Laplacian $\matr{L}$ in equation \eqref{eq:primal_ptdf}. 
Hence, we conclude that primal and dual community structures determine network flows in an equivalent manner. If the edges $e$ and $\ell$ are in different dual communities, $\eta$ will vanishes proportional to the dual Fiedler value $\lambda_2^*$. If the edges $e$ and $\ell$ are in the same community, $\eta$ remains finite in the limit $\lambda_2^* \rightarrow 0$.

We now quantify this effect. To analyse the impact of a community structure, we consider a square lattice with tunable edge weights. We either reduce the edge weights $w_{\ell}$ across the boundary, i.e~in the cut-set, to induce a primal community structure, or we increase the edge weights $w_{\ell}$ along the boundary, i.e.~in the cut-path, to induce a dual community structure (Fig~\ref{fig:LODF_ratio_SG}a,b).
We then consider an inflow and simultaneous outflow $\Delta P$ at two nodes $v_1$ and $v_2$, respectively, that are connected via an edge $e=(v_1,v_2)$. We then compare the resulting flow changes in the same (S) and the other (O) community as the given edge $e$. To this end, we evaluate the ratio of flow changes $R(e,d)$ in the two communities at a given distance $d$ to the trigger edge $e$~\cite{Kaiser2019}
\begin{equation}
    R(e,d)=\frac{\langle|\Delta F_{k} | \rangle_d^{k\in \text{O}}}{\langle
  |\Delta F_{r} |
  \rangle_d^{r\in \text{S}}}=\frac{\langle|\eta_{v_1,v_2,k} | \rangle_d^{k\in \text{O}}}{\langle
  |\eta_{v_1,v_2,r} |
  \rangle_d^{r\in \text{S}}}.
\end{equation}
Here, $\langle \cdot \rangle_d^{\ell \in C}$ denotes the average over all edges $\ell$ in a community $C$ at a given distance $d$ to the trigger edge $e$. To be able to neglect the effect of a specific edge and the distance, we average over all possible trigger edges $e$ and distances $d$ to arrive at the mean flow ratio
\begin{equation}
  R = \langle R(e,d)\rangle_{e,d}.
    \label{eq:flow_ratio}
\end{equation}
The mean flow ratio ranges from $R\approx 0$ if the other module is weakly affected, i.e. there is a strong community effect, to $R\approx 1$ if there is no noticeable effect. We note that $R$ describes flow changes after perturbations in the in- and outflows as well as flow changes as a result of the complete failure of edges (see supplementary information).

Figure~\ref{fig:LODF_ratio_SG} illustrates that both primal and dual communities suppress flow changes in the other community. The mean flow ratio $R$ decays for either community structure. In particular, this decay is well-captured by the Fiedler value of the primal ($\lambda_2$) and the dual ($\lambda_2^*$) graph.

\begin{figure*}[tb]
\begin{center}
  \includegraphics[width=1.0\textwidth]{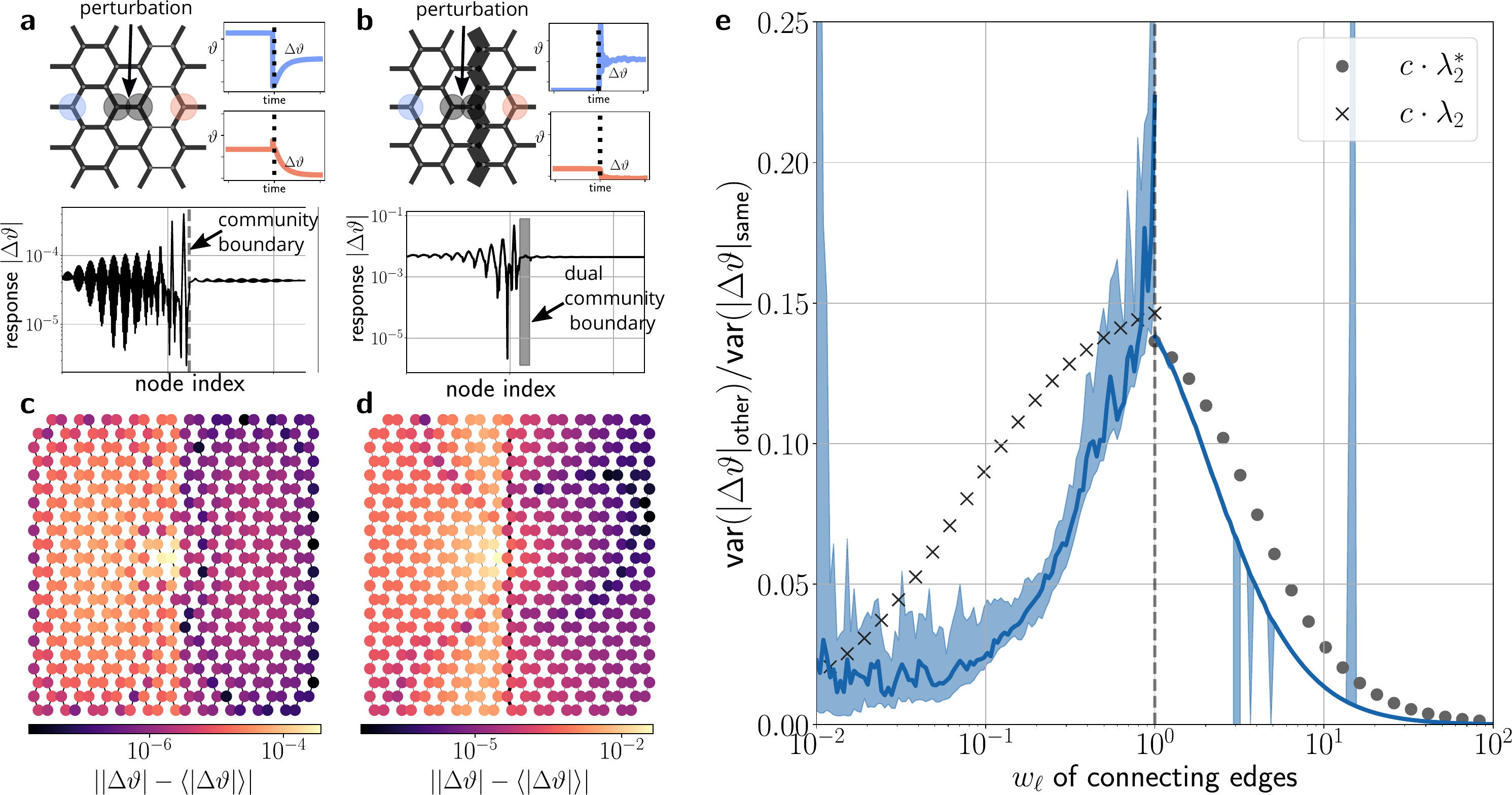}
\end{center}
\caption{\textbf{Suppression of failure spreading in oscillator networks.} 
\textbf{a,b} We analyse the response of a network of phase oscillators \eqref{eq:kuramoto} to a localised perturbation. We consider a honeycomb lattice with (\textbf{a}) a primal community structure induced by weak connectivity across the community boundary and (\textbf{b}) a dual community structure induced by strong connectivity along the boundary. After the perturbation the oscillators relax back to a phase-locked state with phases shifted by $\Delta \vartheta_i$.
\textbf{c,d} We find that the response $|\Delta \vartheta_i|$ is strongly suppressed in the non-perturbed community.
\textbf{e} The overall network response is quantified by the variance of the response within a network community \eqref{eq:phase-var}. The response in the non-perturbed community is strongly suppressed -- the more pronounced the community structure, the stronger the attenuation. 
The existence of a primal or dual community structure is indicated by primal and dual Fiedler values $\lambda_2$ (crosses) and $\lambda_2^*$ (circles), respectively. The blue line represents the median value and the shaded regions indicate the $25 \%$ and $75\%$ quantiles for different random realisations of the natural frequencies $\omega_i$. 
The symbol $c$ denotes a normalization constant used to improve comparison with the response.
\label{fig:Kuramoto}
}
\end{figure*}

These findings are not restricted to linear flow networks, but hold for all diffusively coupled networked systems. We illustrate this effect for a network of second-order phase oscillators that arises in the analysis of electric power grids \cite{motter2013spontaneous,dorfler_synchronization_2013} or mechanically coupled systems \cite{dorfler2014synchronization}
and as a generalization of the celebrated Kuramoto model \cite{acebron2005kuramoto}. The phase $\vartheta_i(t)$ of each oscillator $i=1,\ldots,N$ evolves according to
\begin{equation}
    M_i \ddot \vartheta_i + D_i \dot \vartheta_i = \omega_i + \sum_{j=1}^N w_{ij} \sin(\vartheta_j - \vartheta_i),
    \label{eq:kuramoto}
\end{equation}
where $M_i$ is the inertia and $D_i$ the damping of the $i$th oscillator. To analyse the impact of community structures, we consider a honeycomb lattice with tunable edge weights, with either low  weights $w_{ij} \le 1$ across the boundary or high weights $w_{ij} \ge 1$ along the boundary. The weights of all remaining edges are set to $w_{ij}=1$ and $w_{ij}=0$ if no edge exists between nodes $i$ and $j$.

We now investigate how the steady states of such a network react to a localized perturbation near the community boundary (Fig.~\ref{fig:Kuramoto}a,b). The oscillators relax to a phase-locked state after a short transient period, but the steady-state phases are shifted by an amount $\Delta \vartheta_i$. We recall that a global phase shift is physically irrelevant and is henceforth discarded. The response $|\Delta \vartheta_i|$ crucially depends on the location of the oscillator -- being strongly suppressed across the community boundary (Fig.~\ref{fig:Kuramoto}a-d). To evaluate the impact of the network structure, we quantify the overall network response by the variance of the phases within a community $C$,
\begin{equation}
    \mbox{var}\left( |\Delta \vartheta_i|_C  \right) 
    = \sum_{i \in C} |\Delta \vartheta_i|^2
     - \left( \sum_{i \in C} |\Delta \vartheta_i| \right)^2.
     \label{eq:phase-var}
\end{equation} 
This overall response is generally suppressed in the non-perturbed community, for primal as well as for dual communities. The more pronounced the community structure, the stronger the suppression of the response (Fig.~\ref{fig:Kuramoto}e).
We note that for the current example some differences exist between primal and dual communities. In particular,  statistic fluctuations are larger in the case of primal communities. 

We conclude that the impact of community boundaries, both primal and dual, extends to all diffusively coupled networked systems. 
Our finding can be further substantiated by a linear response analysis \cite{Manik2017}, which highlights the structural similarity to linear flow networks. 
Furthermore, we note that related phenomena were observed for models of information diffusion in networks of different modularity \cite{nematzadeh_optimal_2014}. This finding is closely related, as the diffusion model includes an averaging over all adjacent nodes in the network.

\section*{Discussion}

We have introduced a way to define and identify dual communities in planar graphs. We demonstrated that both primal and dual community structures emerge as different phases of optimized networks -- whether the one or the other is realized in a given optimal network depends on the degree of fluctuations. In addition to that, both types of communities have the ability to suppress failure spreading. They are thus optimized to limit the effect of edge failures or other perturbations.

An important difference between primal and dual communities is the fact that the former are based on a weak connectivity, while dual communities require a strong connectivity. This has significant consequences for supply networks such as power grids. Several approaches have been discussed to limit the connectivity of power grids to prevent the spreading of cascading failures. This includes concepts of microgrids \cite{lasseter2002microgrids} as well as intentional islanding \cite{mureddu2016islanding} or tree-partitioning \cite{bialek2021tree,zocca2021spectral}.
However, future power grids will require more, not less connectivity to transmit renewable energy over large distances ~\cite{Schlachtberger2017,trondle2020trade}. 
Dual communities might resolve this conundrum, as they prevent failure spreading from one community to the other one, without limiting the network's ability to transmit energy. 
This is in stark contrast to primal communities that limit failure spreading from one community to the other one, but also supply. Thus, the construction of dual communities may also serve as a strategy against failure spreading, in line with other ideas brought forward recently~\cite{Kaiser2019}.

Dual communities may be detected using the same techniques as for primal communities once the dual graph is constructed. We here focus on classical spectral methods based on the graph Laplacian, as this matrix naturally arises in the study of graph duality and linear flow networks. By now, numerous algorithms for community detection have been developed that outperform spectral methods depending on the respective application \cite{ newman_modularity_2006,peixoto2014hierarchical,ghasemian2019evaluating}. All these algorithms can be readily applied to the dual graph. A short analysis for a selected example is provided in the supplementary information. 
One challenge remains for the generalization of this approach. For planar graphs, the dual is constructed by a straightforward geometric procedure. For non-planar graphs, a geometric analysis is much more involved \cite{modes2016extracting}. A dual can be constructed algebraically by choosing a basis of the cycle space. However, there is no distinguished basis such that the algebraic dual is not unique. The detailed analysis of community boundaries, in particular the inequality \eqref{eq:lambda-bound-dual}, may provide an alternative route to generalize the definition network communities. For instance, one may choose a decomposition to minimize the dual topological connectivity $\mu_2^*$.

Finally, we note that other approaches have been put forward to generalize the definition of network communities beyond the paradigm of strong mutual connectivity.
For instance, communities can be defined in terms of the similarity of the connectivity of nodes (see, e.g.~ \cite{moore2011active,garcia2018applications}) or from spreading processes \cite{schaub2014structure}.
The graph dual approach presented here emphasizes the role of the community boundaries, both in the definition of the community structure and in its impact on spreading processes and network robustness. 
Furthermore, graph duality provides a rigorous algebraic justification for our generalization of community structures.

\section*{Methods}

\footnotesize

\subsection*{Global cascade model}

In Fig.~\ref{fig:connection_robustness}, we show results from a classic model of global cascades. The state of each node $i=1,\ldots,N$ in time step $t$ is denoted as $s_i(t) \in \{0,1\}$, encoding healthy/operational and infected/faulty, respectively. A node becomes infected/faulty in time step $t+1$ if the weighted average of the neighboring nodes exceeds a threshold $\phi_i$:
\begin{align}
   s_i(t+1) = \left\{ \begin{array}{l l l}  
   1 & \mbox{ if } &
   \frac{\sum_k w_{ik} s_k}{\sum_k w_{ik}} > \phi_i \\
   1 & \mbox{ if } & s_i(t) = 1\\
   0 & & \mbox{else}.
   \end{array}
   \right.
\end{align}
This model is iterated until no further changes of the node states occur. 

We simulate this model on a square lattice with inhomogeneous edge weights. 
A fraction $p_e=0.8$ of edges connecting the center nodes of the lattice with its nearest neighbors is selected at random. The weight of these edges is set to $w_{ij} = w_\ell$, where $w_\ell$ is a tunable parameter, while all other edges have weight $w_{ij} = 1$. 
At time $t=0$, we choose a fraction $\rho_0=0.05$ of all nodes in the left part and set them to state $1$, while all other nodes are in state $0$. For each value of the parameter $w_\ell$, we repeat the simulation for 1000 random initial conditions and record the fraction of nodes in state $1$, denoted as $\rho_\infty$.

\subsection*{Creation of dual graphs: planar networks}

In this manuscript, we mostly restrict our analysis to planar, connected graphs.
A graph $G=(V,E)$ with vertex set $V$ and edge set $E$ is called planar if it may be drawn in the plane without two edges crossing~\cite{Dies10}.
For a plane graph $G$, it is straightforward to establish a duality to another graph, referred to as the plane dual or simply dual graph and denoted as $G^*$.
The dual graph is constructed using the cycles of graph $G$ where a cycle is defined to be a path that starts and ends in the same vertex consisting of otherwise distinct vertices.
For a graph with $M$ edges and $N$ nodes, these cycles form the graph's cycle space of dimension $N^*=M-N+1$.
A particular basis of this space is given by the faces of the plane embedding, such that the dual graph $G^*=(V^*,E^*)$ has a vertex corresponding to each face.
Two dual vertices $v_1^*$ and $v_2^*$ are connected by a dual edge $e^*=(v_1^*,v_2^*)\in E^*(G^*)$ if the two corresponding cycles share an edge.
For a weighted graph, the edge weight of the dual edge is chosen to be the inverse of the corresponding edge shared by the two cycles.
Furthermore, we adopt the following convention; if two cycles share $k$ edges $e_1,..,e_k$ with weights $w_1,...,w_k$, we lump them together into a single dual edge $e^*$ with edge weight $w^*=\sum_{i=1}^k w_i^{-1}$ thus avoiding multi-edges in the dual graph and refer to this model as the reduced dual graph.
Note that the definition of the edge-cycle incidence matrix $\matr{C}$ needs to be adjusted for the reduced dual graph.

\subsection*{Creation of dual graphs: non-planar networks} For non-planar networks, the basis of the cycle space may no longer be uniquely determined based on the graph's embedding. Different basis choices result in different dual graphs. When calculating the dual graph of the non-planar European topology shown in Figure~\ref{fig:Optimal_supply_networks}k-m, we used the graphs' minimum cycle basis to create the dual graph.

\subsection*{Hierarchical decomposition of dual graph}
We assign $m$ hierarchy levels based on repeated spectral bisection of the dual graph using the following procedure: 
\begin{enumerate}
    \item Assign dual communities to the graph by making use of the Fiedler vector $\vec{v}_2^*$ of the dual graph $G^*$
    \item Identify the edges that lie on the boundary between the two communities by checking for edges in the primal shared by faces corresponding to dual nodes of both communities
    \item Remove the boundary edges from the graph thus creating two primal subgraphs $G_1$ and $G_2$
    \item Repeat the process $m$ times
\end{enumerate}

\subsection*{Building supply networks with fluctuating sources} 

Our framework extends the fluctuating sink model proposed by Corson~\cite{Corson2010} where a single, fluctuating source supplies the remaining network. To this end, we consider a linear flow network with sources and sinks attached to the nodes and model the sinks as Gaussian random variables $P\in\mathcal{N}(\mu,\sigma)$. In contrast to previous work, we consider multiple sources, $N_s$ in number, whose statistics can be derived from the statistics of the sinks due to the fact that the in- and outflows at the nodes need to sum to zero (see supplementary information). We then add additional fluctuations to the sources that are built using Dirichlet random variables $X_i\sim \operatorname{Dir}(\alpha)$. The fluctuations are constructed such that they only influence the statistics of the sources and their variance is tuned by a single parameter $\alpha$. To be able to tune the influence of this additive noise variable, we introduce a scale parameter $K\in\mathbb{R}$. The inflow at a source at a given point in time is then given by (see supplementary information)
\begin{equation}
    P_{s_i} = -\frac{1}{N_s}\sum_{i=N_s+1}^N P_i + K  \left(\frac{1}{N_s}-X_i\right),
\end{equation}
where $P_i$ are the outflows at the sinks. Here, we arranged the node order such that the sources have indices $1,\ldots,N_s$ and sinks are numbered as $N_s+1,\ldots,N$. To produce Figure~\ref{fig:Optimal_supply_networks}, we considered a network with $N_s=2$ and fix the parameters of the Gaussian distribution as $\mu = -1, \sigma = 0.1$.
The scale parameter is set to $K=500$ and the parameter $\alpha$ controlling the statistics of the Dirichlet distribution is varied in the interval
$\alpha\in[10^{-2},10^4]$, thus changing the variance of the Dirichlet variables $\sigma^2_D=K^2\frac{(N_s-1)}{N_s^2(N_s\alpha+1)}$ (see supplementary information).

\subsection*{Analysis of power grid datasets}

The networks shown in Figure~\ref{fig:Optimal_supply_networks}f-g were determined using the open energy system model 'PyPSA-Eur' cost-optimizing the transmission grid for different levels of carbon-dioxide emission reductions with respect to the emission levels in 1990. For each target carbon-dioxide emission reduction level, the network is optimized over for an entire year with the weather conditions of 2013 and 2-hourly resolution (see Ref.~\cite{horsch_2018} for further details on the optimization model). To analyse the network topology, we set the weight $w_\ell$ of a line $\ell$ to the maximal apparent power that can flow through it. Note that this is different from weighting the line by its line susceptance and allows us to also incorporate high-voltage DC lines. To determine the level of fluctuating renewables shown in Figure~\ref{fig:Optimal_supply_networks}f-g, we calculate the share of the total annual generation in the entire system that is produced by fluctuating renewables. To this end, we assume that the following technologies are fluctuating renewable energy sources: offshore wind AC, offshore wind DC, onshore wind, run-of-the-river hydroelectricity (ror) and solar. In Supplementary Figures~S3 and S4 we show as an example the generation for two months and carbon emission reduction levels over time and on the network level.

\subsection*{Data availability}

The topology of the Central European power grid have been extracted from the open European energy system model PyPSA-Eur~\cite{horsch_2018}, which is fully available online~\cite{horsch_zenodo}.

Leaf data was provided by the authors of Ref.~\cite{ronellenfitsch_topological_2015} and is available from the respective authors upon request. The leaf venation networks are based on microscopic recordings. Edge conductivities $w_{ij}$ are assumed to scale with the radius $r_{ij}$ of the corresponding vein $(i,j)$ as $w_{ij}\varpropto r_{ij}^4$ according to the Hagen-Poisseuille law, see Ref.~\cite{coomes_scaling_2008} for a detailed discussion. We used the radius in pixels at a resolution of 6400 dpi. 

The data generated in this study (effective topology of power grid networks and selected leaf venation networks) as well as essential computer code for data processing have been deposited in a Zenodo repository \cite{zenodo-dualcoms}.

\subsection*{Code availability.} 

Computer code is available on github \cite{github-dualcoms} with the specific version used in this publication being archived at Zenodo \cite{zenodo-dualcoms}.

\noindent{\textbf{Acknowledgements}}\\
We thank Torsten Eckstein for providing some of the digitized leaf networks, Tom Brown and Fabian Neumann for providing us with the optimized power grids and Eleni Katifori for helpful discussions. 
We gratefully acknowledge support from the German Federal Ministry of Education and Research (BMBF) via the grant ``CoNDyNet2'' with grant no. 03EK3055B, the Helmholtz Association via the grant ``Uncertainty Quantification -- From Data to Reliable Knowledge (UQ)'' with grant no.~ZT-I-0029 and the Deutsche Forschungsgemeinschaft (DFG, German Research Foundation) with grant No. 491111487. \\ \\
\noindent\textbf{Author Contributions}\\
\normalsize D.W. conceived research and acquired funding. F.K. and D.W. designed research. F.K. carried out all numerical simulations. F.K. and P.C.B. evaluated the results and designed the figures. All authors contributed to discussing the results and writing the manuscript.\\ \\
 \normalsize\textbf{Competing Interests} \\
The Authors declare no Competing Financial or Non-Financial Interests.\\ \\

\textbf{Correspondence and requests for materials} should be addressed to Dirk Witthaut~(email: d.witthaut@fz-juelich.de).

\end{document}


\title{Supplementary Information for \\ Dual communities in spatial networks}

\author{Franz Kaiser}
    \affiliation{Forschungszentrum J\"ulich, Institute for Energy and Climate Research (IEK-STE), 52428 J\"ulich, Germany}
    \affiliation{Institute for Theoretical Physics, University of Cologne, K\"oln, 50937, Germany}
\author{Philipp C. Böttcher}
  \affiliation{Forschungszentrum J\"ulich, Institute for Energy and Climate Research (IEK-STE), 52428 J\"ulich, Germany}
\author{Henrik Ronellenfitsch}
    \affiliation{Physics Department, Williams College, 33 Lab Campus Drive, Williamstown, MA 01267, U.S.A.}
    \affiliation{Department of Mathematics, Massachusetts Institute of Technology, Cambridge, MA 02139, U.S.A.}
\author{Vito Latora}
    \affiliation{School of Mathematical Sciences, Queen Mary University of London, London E1 4NS, UK}
    \affiliation{Dipartimento di Fisica ed Astronomia, Universit{\`a} di Catania and INFN, 95123 Catania, Italy}
    \affiliation{Complexity Science Hub Vienna, 1080 Vienna, Austria}
\author{Dirk Witthaut}%
    \affiliation{Forschungszentrum J\"ulich, Institute for Energy and Climate Research (IEK-STE), 52428 J\"ulich, Germany}
    \affiliation{Institute for Theoretical Physics, University of Cologne, K\"oln, 50937, Germany}

\date{\today}

\maketitle

\onecolumngrid
    \noindent This Supplementary Material contains five Supplementary Notes and six Supplementary Figures.
   
\newpage
\widetext
\tableofcontents

\newpage
\clearpage
\renewcommand{\figurename}{Supplementary Figure}
\section*{Supplementary Figures}

\begin{figure*}[ht!]
\begin{center}
  \includegraphics[width=1.\textwidth]{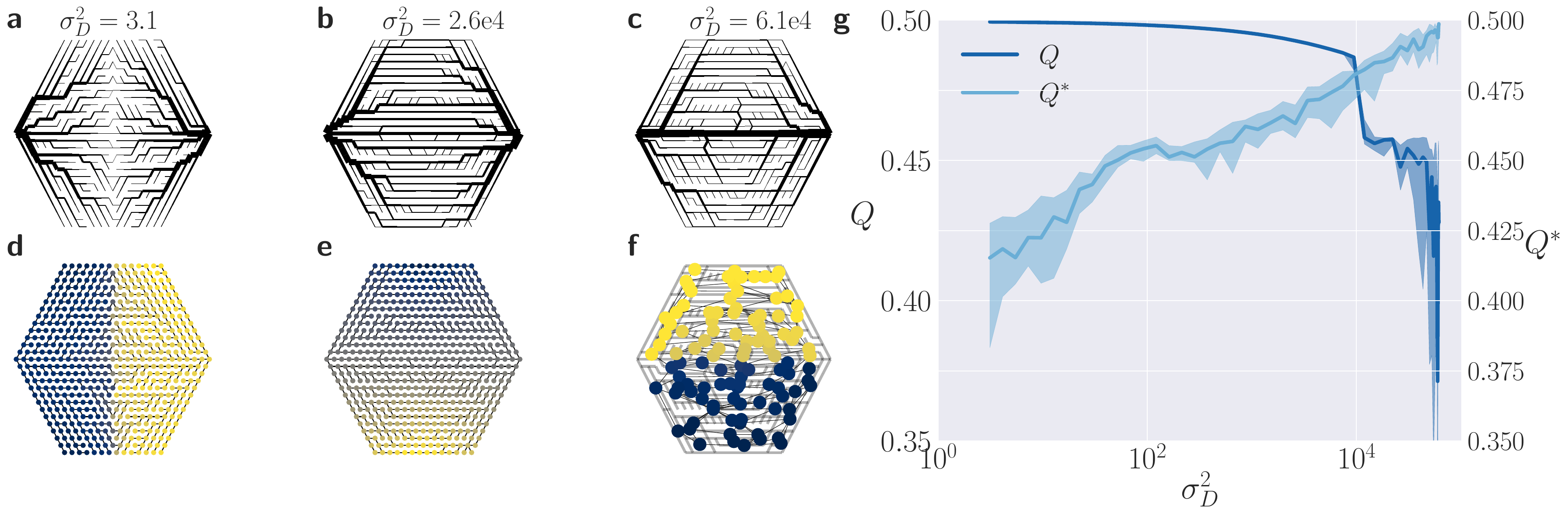}
\end{center}
\caption{\textbf{Primal and dual communities emerge naturally in optimal supply networks.} 
\textbf{a-c,} We consider a triangular lattice with two fluctuating sources located at the leftmost and the rightmost corner. 
The remaining nodes correspond to sinks, whose demand fluctuates following an iid normal distribution. 
The sources balance the demand in total, but the spatial distribution varies according to a Dirichlet distribution introducing additional fluctuations with variance $\sigma_D^2$.
We then determine the optimal network structure and edge weights to minimise the average dissipation~\cite{Corson2010}. 
We observe a transition from (\textbf{a}) a primal community structure at low fluctuations to (\textbf{c}) a dual community structure at high fluctuations.
\textbf{d-f,} Communities can be identified according to the Fiedler vector. The figure shows (\textbf{d,e}) the primal Fiedler vector and (\textbf{f}) the dual Fiedler vector in a colour map for the networks depicted in panels (\textbf{a-c}), respectively.
\textbf{g,} The transition from a primal to a dual community structure can be quantified either by the algebraic connectivity (cf.~main text) or by the modularity \cite{newman_modularity_2006}. The figure shows the modularity $Q$ of the primal community decomposition and the modularity $Q^*$ of the dual community decomposition as a function of the variance $\sigma_D^2$ of the source strength.
\label{fig:Optimal_supply_networks_mod} 
}
\end{figure*}

\begin{figure*}
\begin{center}
  \includegraphics[width=1.\textwidth]{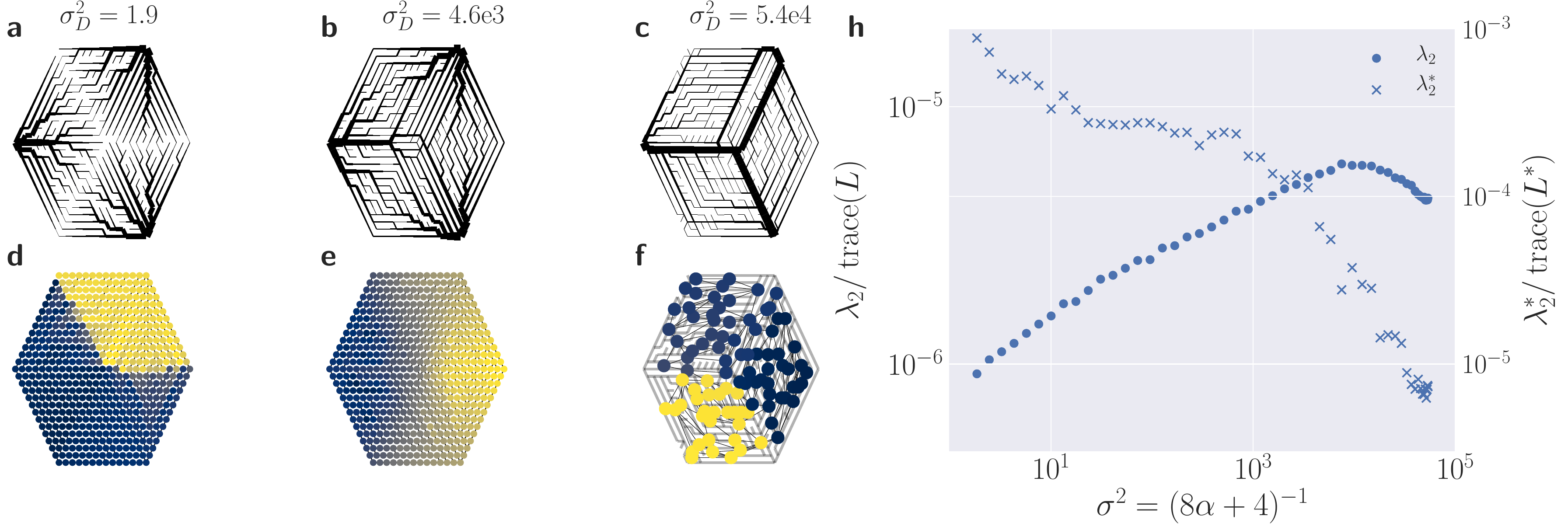}
\end{center}
\caption{\textbf{Primal and dual communities emerge naturally in optimal supply networks with three sources.} 
\textbf{a-c,} We consider a triangular lattice with three fluctuating sources located at the leftmost, the upper right and the lower right corner. The remaining nodes correspond to sinks, whose demand fluctuates following an iid normal distribution. The sources balance the demand in total, but the spatial distribution varies according to a Dirichlet distribution introducing additional fluctuations with variance $\sigma_D^2$. We then determine the optimal network structure and edge weights to minimise the average dissipation~\cite{Corson2010}. We observe a transition from (\textbf{a}) a primal community structure at low fluctuations to (\textbf{c}) a dual community structure at high fluctuations.
\textbf{d-f,} Communities can be identified according to the Fiedler vector. 
The figure shows (\textbf{d,e}) the primal Fiedler vector and (\textbf{f}) the dual Fiedler vector in a colour map for the networks depicted in panels (\textbf{a-c}), respectively.
\textbf{h,} The scaling of the corresponding primal ($\lambda_2$, circles) and dual ($\lambda_2^*$, crosses) Fiedler value confirms the transition.
\label{fig:Optimal_supply_networks_three_sources}
}
\end{figure*}

\begin{figure*}
\begin{center}
  \includegraphics[width=1.\textwidth]{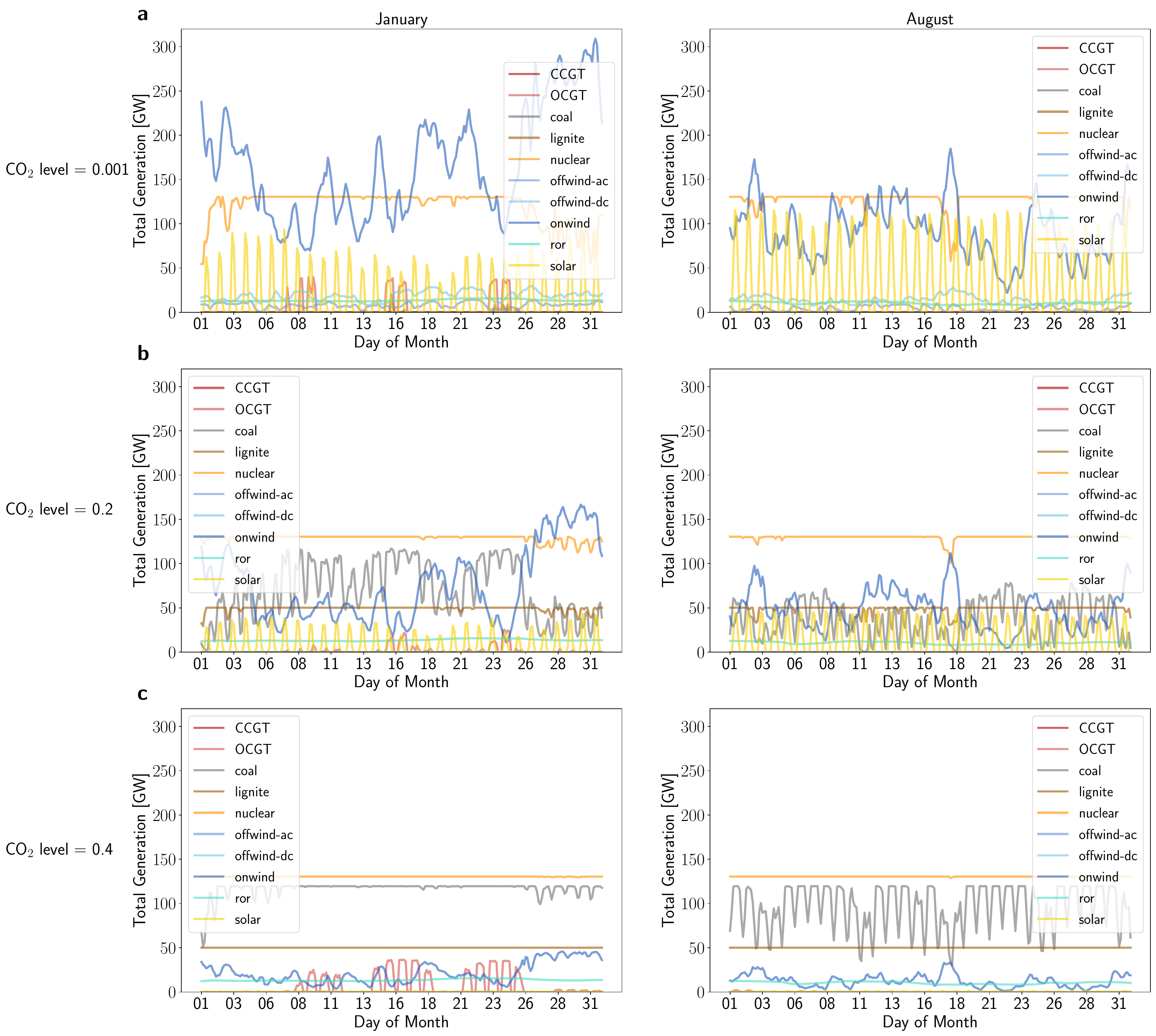}
\end{center}
\caption{\textbf{Optimized electric power systems simulated with the open energy system model 'PyPSA-Eur'.}
The model optimizes both the investments and the operation of the generation and transmission infrastructures to minimize the total system costs for a given CO$_2$ cap. The figure shows the total power generation by generator type for two exemplary months in winter (January, left) and summer (August, right) with a temporal resolution of 2 hours. Results are shown for three different values of the CO$_2$ level measured relative to the reference year 1990: (\textbf{a}) 0.001,  (\textbf{b}) 0.2, and 
(\textbf{c}) 0.4.
With decreasing levels of CO$_2$, generation shifts more and more towards fluctuating renewable power sources.  \label{fig:pypsa_illustration}
}
\end{figure*}

\begin{figure*}
\begin{center}
  \includegraphics[width=.8\textwidth]{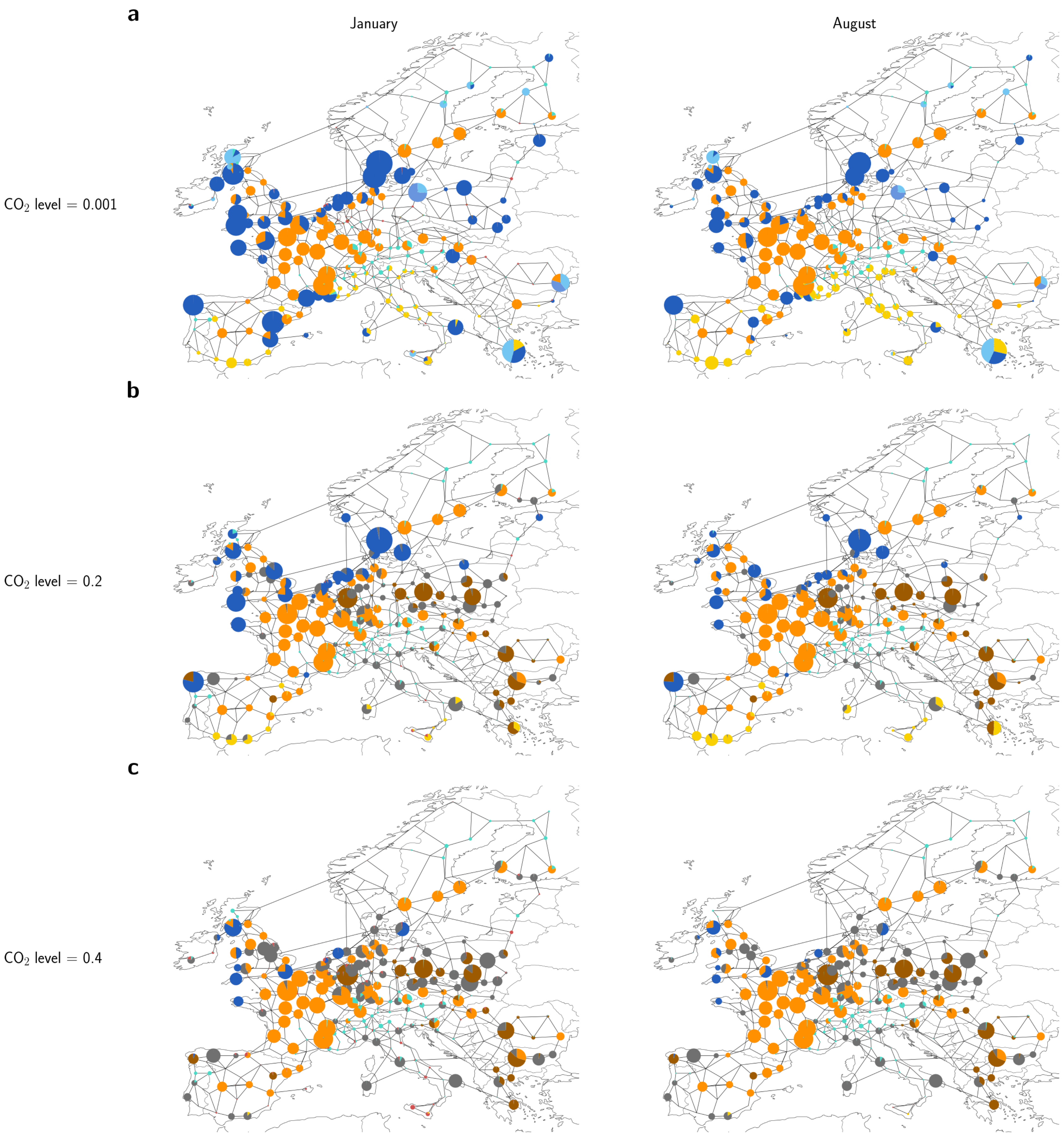}
\end{center}
\caption{
\textbf{Electric power systems and grids during the energy transition.}
The figure shows the change in power generation and transmission infrastructures as the CO$_2$ level is reduced. Power system layout and operation are optimized using the open energy system model 'PyPSA-Eur'. The figure shows the total generation per month and grid node for two exemplary months in winter (January, left) and summer (August, right) as well as the layout of the power transmission grid. Generation is shown in a pie chart, where the size is proportional to the total generation and the colour code indicates the generation type, cf.~Supplementary Figure \ref{fig:pypsa_illustration}. Results are shown for three different values of the CO$_2$ level measured relative to the reference year 1990: (\textbf{a}) 0.001,  (\textbf{b}) 0.2, and 
(\textbf{c}) 0.4. With decreasing levels of CO$_2$, generation shifts more and more towards fluctuating renewable power sources with pronounced regional differences: Wind power in predominantly generated along the coast, especially around the North Sea. Solar power is generated predominantly in Southern Europe.
\label{fig:pypsa_illustration_networks}
}
\end{figure*}

\begin{figure*}
\begin{center}
  \includegraphics[width=1.\textwidth, clip, trim = 1cm 2cm 6cm 2cm]{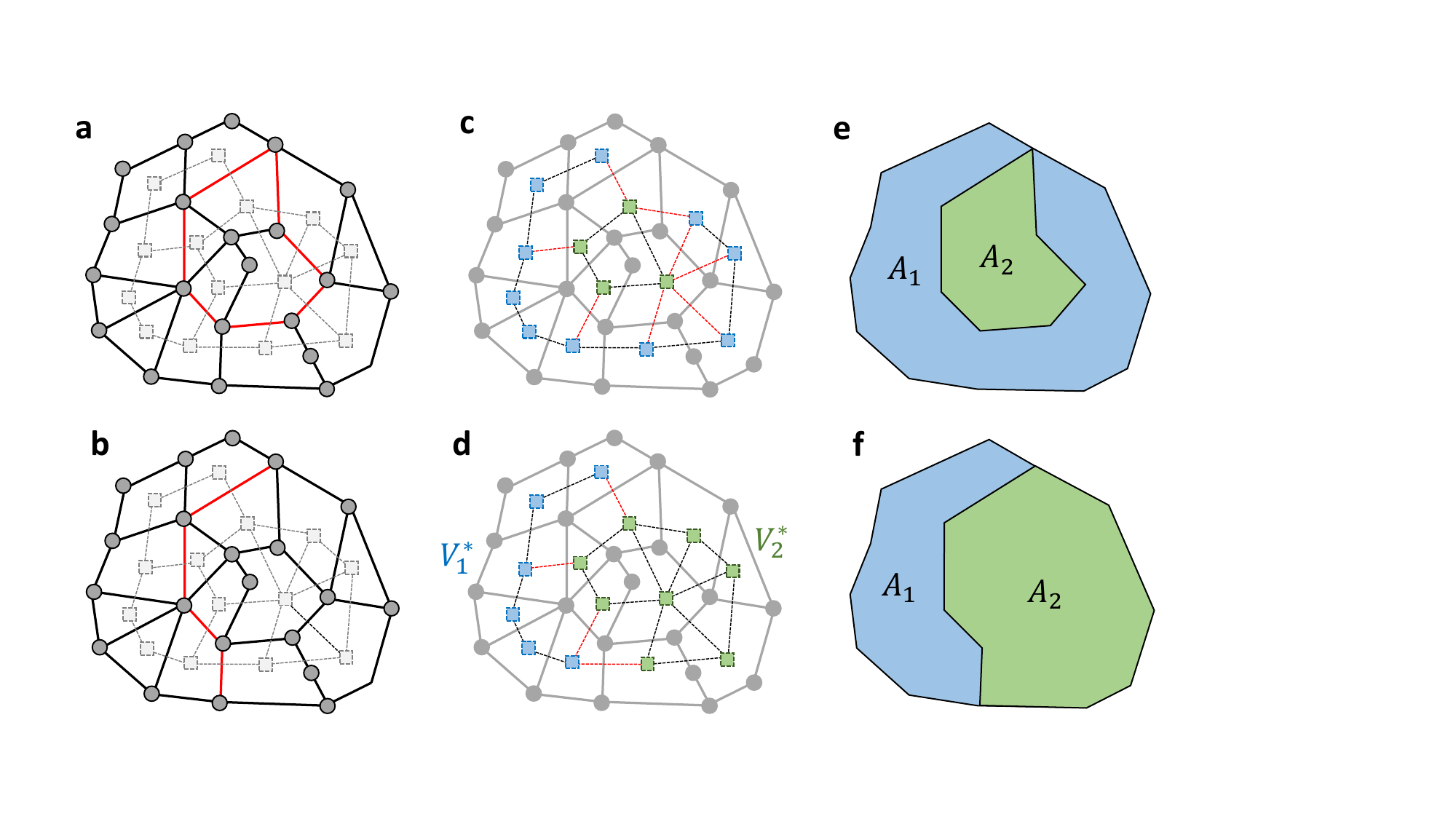}
\end{center}
\caption{\textbf{Definition of a cut-path.} 
\textbf{a,b,} A graph (solid lines) and its weak dual (dashed lines). The edge set of a cut-path $S$ is indicated in red colour.
\textbf{c,d,} The cut-path induces a cut of the dual graph into two connected vertex sets $V_1^*$ (blue) and $V_2^*$ (green). The set $S^*$ is a minimal cut set in the dual graph.  
\textbf{e,f,} The faces corresponding to the vertex sets $V_1^*$ and $V_2^*$ form two connected polygons $A_1$ and $A_2$ in the plane. 
\label{fig:dual-cut}
}
\end{figure*}

\begin{figure*}[h]
\begin{center}
  \includegraphics[width=1.\textwidth]{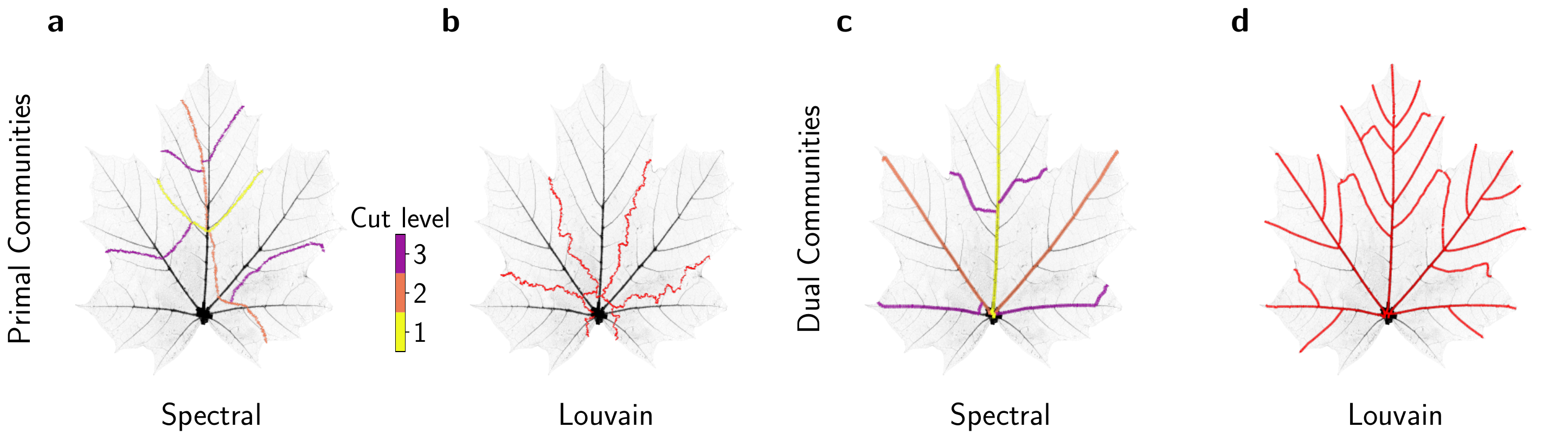}
\end{center}
\caption{\textbf{Different clustering methods in primal and dual graphs} 
Distinct primal and dual communities can be found using both spectral clustering and the \textit{Louvain Detection Algorithm}. 
The original (i.e. primal) leaf venation network of \textit{Acer platanoides} is shown in black in each panel.
The resulting hierarchical cuts for three cut levels employing spectral clustering using the primal graph Laplacian and the dual graph Laplacian can be seen in panel \textbf{a} and \textbf{c}.
The community structure revealed by the \textit{Louvain Detection Algorithm}, choosing a resolution parameter $\xi=0.01$, using the primal graph and dual graph are shown in panel \textbf{b} and \textbf{d}, respectively.
\label{fig:clustering_primal_dual}
}
\end{figure*}

\clearpage
\newpage

\section*{Supplementary Notes}

\refstepcounter{notecounter}
\subsection*{Supplementary Note \thenotecounter :~Graph duality}

In this note, we review the fundamental definitions and properties of dual graphs extending the main text.

\subsubsection*{Basic definitions}

Let $G=(V,E)$ be a plane, undirected graph with $N= |V|$ vertices and $L=|E|$ edges. We label all vertices consecutively as $(1,2,3,\ldots,N)$. Edges are either referred to in terms their respective endpoints or by consecutive labels as $(1,2,3,\ldots, L)$. We begin with unweighted graphs and include edge weights later. For each edge, we fix an orientation -- for instance to keep track about the direction of flows as discussed below. The topology of the graph and the orientations are encoded in the node-edge incidence matrix $\mathbf{I} \in \mathbb{R}^{L \times N}$ with entries~\cite{newman2010}
\begin{equation*}
   I_{\ell,n} = \left\{
   \begin{array}{r l}
      1 & \; \mbox{if edge $\ell$ starts at node $n$},  \\
      - 1 & \; \mbox{if edge $\ell$ ends at node $n$},  \\
      0     & \; \mbox{otherwise}.
  \end{array} \right.
\end{equation*}
The Graph Laplacian is defined as 
\begin{equation}
    \mathbf{L} = \mathbf{I}^\top \mathbf{I}.
    \label{eq:Laplacian-Primal-I}
\end{equation}
This definition coincides with the definition 
\begin{equation*}
    \mathbf{L} = \mathbf{D}-\mathbf{A},
\end{equation*}
in terms of the adjacency matrix $\mathbf{A} \in \mathbb{R}^{N \times N}$ with entries
\begin{equation*}
   A_{m,n} = \left\{
   \begin{array}{r l}
      1 & \; \mbox{if nodes $m$ and $n$ are connected},  \\
      0     & \; \mbox{otherwise}.
  \end{array} \right.
\end{equation*}
and the diagonal degree matrix $\mathbf{D} \in \mathbb{R}^{N \times N}$ with entries
\begin{equation*}
   D_{n,n} = \sum_m A_{n,m}.
\end{equation*}

A cycle in the graph can be described by a vector $\vec v \in \mathbb{R}^L$ with entries
\begin{equation*}
   v_{\ell} = \left\{
   \begin{array}{r l}
      + 1 & \; \mbox{if edge $\ell$ is part of the cycle},  \\
      - 1 & \; \mbox{if the reversed edge $\ell$ is part of the cycle},  \\
      0     & \; \mbox{otherwise}.
  \end{array} \right.
\end{equation*}
that satisfies 
\begin{equation}
   \mathbf{I} \vec v = \vec 0.     
\end{equation}
This property is easy to understand for flow networks: Here, the vector $\vec v$ describes a cycle flow which has neither sink or source such that  $\mathbf{I} \vec v = \vec 0$. The cycle flows constitute a vector space that coincides with the kernel of the edge incidence matrix $\mathbf{I}$. In plane graphs, the faces provide a distinguished basis of the cycle space. This basis is encoded in the cycle incidence matrix $\mathbf{C}$ with entries
\begin{equation*}
   C_{\ell,f} = \left\{
   \begin{array}{r l}
      1 & \; \mbox{if the edge $\ell$ is part of face $f$},  \\
      - 1 & \; \mbox{if the reversed edge $\ell$ is part of face $f$},  \\
      0     & \; \mbox{otherwise}.
  \end{array} \right.
\end{equation*}
For convenience, we fix the orientation of each face in the counter-clockwise direction.

We now proceed to the definition of the dual graph $G^* = (V^*,E^*)$. Two common definitions exist, differing in the treatment of the exterior of the graph.
\begin{enumerate}
    \item The vertex set $V^*$ consists of all faces of the primal graph, including the outer face. Then, the dual graph has $|V^*| = L-N+2$ vertices. 
    \item The vertex set $V^*$ consists of all faces of the primal graph, excluding the outer face. Then, the dual graph has $|V^*| = L-N+1$ vertices. This graph is typically referred to as a \emph{weak dual}.
\end{enumerate}
Two faces (=dual vertices) are adjacent if they share at least one edge. We note that two faces may share more than one edge, which can be included in two different ways. First, we keep every edge separately making the dual graph a multigraph. Alternatively, we can condense the multiedges into ordinary ones, which requires the introduction of edge weights. We will follow the latter definition as our focus is on flow networks which are weighted anyway.

We may now define a Laplacian for the dual graph.
Starting with the first definition that includes the outer face, we define, in analogy to Eq.~\eqref{eq:Laplacian-Primal-I}, the dual Laplacian as
\begin{equation}
    \mathbf{L}^* = \mathbf{C}^\top \mathbf{C} 
           \in \mathbb{R}^{(L-N+2) \times (L-N+2)} \, .
    \label{eq:Laplacian-Dual-C}
\end{equation}
As for the primal graph, we can decompose the Laplacian in terms of the adjacency and the degree matrix,
\begin{equation}
    \mathbf{L}^* = \mathbf{D}^* - \mathbf{A}^*.
    \label{eq:Laplacian-Dual-DA}
\end{equation}
The dual adjacency matrix $\mathbf{A}^* \in \mathbb{R}^{(L-N+2) \times (L-N+2)}$ has the entries
\begin{equation*}
   A^*_{f,g} = \mbox{number of edges shared by faces $f$ and $g$} \, .
\end{equation*}
and the degree matrix $\mathbf{D}^* = \in \mathbb{R}^{(L-N+2) \times (L-N+2)}$ is diagonal with entries
\begin{equation*}
   D^*_{f,f} = \sum_g A_{f,g}.
\end{equation*}

In the second definition, we exclude the exterior face. We may again proceed to define the Graph Laplacian via the cycle incidence matrix and write
\begin{equation}
     \mathbf{L}_g^* = \mathbf{C}^\top \mathbf{C} 
        \in \mathbb{R}^{(L-N+1) \times (L-N+1)}.
     \label{eq:Laplacian-dual-grounded}
\end{equation}
However, this matrix is \emph{not} a full Laplacian matrix as its row-sums are not guaranteed to vanish. Instead, $\mathbf{L}_g^*$  is a \emph{grounded} Laplacian, that is obtained from a full Laplacian matrix by removing one row and the corresponding column~\cite{miekkala_graph_1993}. Alternatively, we can define the Laplacian via the adjacency matrix as
\begin{equation}
    \mathbf{L}_i^* = \mathbf{D}^* - \mathbf{A}^*.
    \label{eq:Laplacian-Dual-DA-i}
\end{equation}
The adjacency matrix $\mathbf{A}^*$ is a $(L-N+1) \times (L-N+1)$ matrix has the
entries
\begin{equation}
   A^*_{f,g} = \mbox{number of edges shared by faces $f$ and $g$} \, .
   \label{eq:adjacency-dual-red}
\end{equation}
as before and the degree matrix $\mathbf{D}^* = \in \mathbb{R}^{(L-N+1) \times (L-N+1)}$ is diagonal with entries
\begin{equation*}
   D^*_{f,f} = \sum_g A_{f,g}.
\end{equation*}
Hence, the row-sums of $\mathbf{L}_i^*$ vanish as required for a Laplacian matrix. We note that the two matrices are related as
\begin{equation}
    \mathbf{L}_g^* = \mathbf{L}_i^* + \mathbf{D}^\circ \, ,
\end{equation}
where the diagonal matrix $\mathbf{D}^\circ$ summarizes the connectivity of the interior and exterior
\begin{equation}
    D^\circ_{f,f} = \mbox{number of edges shared between $f$ and the exterior face}.
\end{equation}

Finally, we want to briefly comment on the two alternative definitions of the dual graph and their respective advantages and disadvantages: 
\begin{itemize}
\item
The first definition including the exterior face has two closely related advantages making this definition convenient in fundamental analysis: (i)  The definitions \eqref{eq:Laplacian-Dual-C} and \eqref{eq:Laplacian-Dual-DA} of the graph Laplacian coincide. (ii) Every edge is part of exactly two faces. This property forms the basis of MacLane's planarity criterion.

\item
However, the first definition is rather inconvenient in the analysis of network structures. The cycle basis is overcomplete. The additional exterior face introduces ``artificial'' non-local connections in the dual, which are especially unsuitable for the definition and analysis of dual communities. For community detection, we will thus work only with the adjacency matrix \eqref{eq:adjacency-dual-red} and the Laplacian \eqref{eq:Laplacian-Dual-DA-i}.

\item
In the analysis of flow networks we can use both definitions alike. In the first case, we have one degree of freedom that remains to be fixed. A common convention is to ``ground'' the exterior face -- then the first definition reduces to the second one. 
\end{itemize}

\subsubsection*{Weighted Graphs I}

We now consider a weighted plane graph with edge weights denoted as $w_e \in \mathbb{R}_{>0}$ for all edges $e\in E$.
In the analysis of graph duality and flow networks, we can include these weights in two different ways: (i) directly in the incidence matrices and (ii) via separate diagonal weight matrices. 
We first treat option (i) and focus on the definition of the dual that includes the exterior face. 

Subsequently, we define the weighted node-edge incidence matrix 
$\mathbf{I} \in \mathbb{R}^{L \times N}$ as
\begin{equation*}
   I_{e,n} = \left\{
   \begin{array}{r l}
      +\sqrt{w_e} & \; \mbox{if edge $e$ starts at node $n$},  \\
      - \sqrt{w_e} & \; \mbox{if edge $e$ ends at node $n$},  \\
      0     & \; \mbox{otherwise}.
  \end{array} \right.
\end{equation*}
The Graph Laplacian is defined as before
\begin{equation*}
   \mathbf{L} = \mathbf{I}^\top \mathbf{I}.
\end{equation*}
with entries
\begin{equation}
    L_{mn}   =\left\{\begin{array}{l l }
      -w_{mn} & \; \mbox{if $m$ is connected to $n$},  \\
      \sum_{(m,k) \in E} w_{mk} & \; \mbox{if $m=n$},  \\
      0     & \; \mbox{otherwise}.
  \end{array} \right. 
  \label{eq:Laplacian}
\end{equation}
To ensure the defining equation for the cycle space, $\mathbf{I}^\top \, \mathbf{C} = 0$, the definition of the cycle edge incidence matrix must be generalized to
\begin{equation*}
   C_{\ell,f} = \left\{
   \begin{array}{r l}
      1/\sqrt{w_\ell} & \; \mbox{if the edge $\ell$ is part of face $f$},  \\
      - 1/\sqrt{w_\ell} & \; \mbox{if the reversed edge $\ell$ is part of face $f$},  \\
      0     & \; \mbox{otherwise}.
  \end{array} \right.
\end{equation*}
The dual Laplacian matrix is again given by the equivalent definitions \eqref{eq:Laplacian-Dual-C} and \eqref{eq:Laplacian-Dual-DA}, where the adjacency matrix is now given by
\begin{equation}
   A^*_{f,g} = \sum_{\mbox{edges $e$ shared by faces $f$ and $g$}} \frac{1}{w_e} \, .
   \label{eq:dual-weighted-adjacency}
\end{equation}
and the diagonal degree matrix is given by $D^*_{f,f} = \sum_g A_{f,g}$ as usual. We stress that the dual weights are inverse to the primal weights as expressed via Eq.~\eqref{eq:dual-weighted-adjacency}.

\subsubsection*{Weighted Graphs II}

A second option is to include the weights via an additional diagonal weight matrix
\begin{equation}
   \mathbf{W} = \mbox{diag}(w_1,w_2,\ldots,w_L) \in \mathbb{R}^{L\times L}
   \label{eq:weight-matrix}
\end{equation} 
while keeping the unweighted incidence matrix $\tilde{\mathbf{I}} \in \mathbb{R}^{L \times N}$ with entries
\begin{equation}
   \tilde I_{\ell,n} = \left\{
   \begin{array}{r l}
      1 & \; \mbox{if edge $\ell$ starts at node $n$},  \\
      - 1 & \; \mbox{if edge $\ell$ ends at node $n$},  \\
      0     & \; \mbox{otherwise}.
  \end{array} \right.
  \label{eq:incidence-unweighted}
\end{equation}
The Laplacian matrix of the primal graph is then defined by the definition
\begin{equation}
   \mathbf{L} = \tilde{\mathbf{I}}^\top \mathbf{W} \tilde{\mathbf{I}},
   \label{eq:Laplacian-weighted-IWI}
\end{equation}
which coincides with Eq.~\eqref{eq:Laplacian} as desired.

We now proceed to the dual, focusing on the second definition that excludes the exterior face. The cycle edge incidence matrix $\tilde{\mathbf{C}} \in \mathbb{R}^{L \times (L-N+1)}$ now remains unweighted with components 
\begin{equation}
   \tilde C_{\ell,f} = \left\{
   \begin{array}{r l}
      1 & \; \mbox{if the edge $\ell$ is part of face $f$},  \\
      - 1 & \; \mbox{if the reversed edge $\ell$ is part of face $f$},  \\
      0     & \; \mbox{otherwise}.
  \end{array} \right.
  \label{eq:cycle-in-unweighted}
\end{equation}
The dual grounded Laplacian \eqref{eq:Laplacian-dual-grounded} is recovered via the definition
\begin{equation}
    \mathbf{L}_g^* = \tilde{\mathbf{C}}^\top \mathbf{W}^{-1} \tilde{\mathbf{C}}.
    \label{eq:Laplacian-weighted-dual-CWC}
\end{equation}
Comparing to the definition of the primal Laplacian \eqref{eq:Laplacian-weighted-IWI}, we again see that dual weights are inverse to primal weights.

\clearpage

\refstepcounter{notecounter}
\subsection*{Supplementary Note \thenotecounter:~Algebraic and topological connectivity}

The correspondence of algebraic and topological connectivity is well established for primal graphs. Here, we briefly review this correspondence and extend it to the dual graph.

\subsubsection*{Primal Graph}

Let $G=(V,E,\mathbf{W})$ be a weighted connected graph and $\mathbf{L}$ its Laplacian. The Laplacian has one vanishing eigenvalue $\lambda_1 = 0$ with corresponding eigenvector $\vec v_1 \propto \vec 1 = (1,1,\ldots,1)^\top$. Since $\mathbf{L}$ is positive semi-definite~\cite{newman2010}, we can write the eigenvalues in an ordered way
\begin{equation}
    0 = \lambda_1 \le \lambda_2 \le \cdots \le \lambda_N.
\end{equation}
The second eigenvalue $\lambda_2$, the Fiedler value, provides a measure of the algebraic connectivity of a graph.
If it is small, we have a  pronounced community structure.

We now relate the algebraic connectivity $\lambda_2$ to the topological connectivity across a cut-set of the graph. To this end, we first express $\lambda_2$ via the variational principle as 
\begin{equation}
    \lambda_2 = \min_{\vec v \perp \vec 1} \frac{\vec v^\top \mathbf{L} \vec v}{
                       \vec v^\top \vec v} \, .
\end{equation}
We can then find an upper bound by choosing an appropriate trial vector for $\vec v$. To this end, we decompose the vertex set into two parts $V_1$ and $V_2 = V \backslash V_1$ and define a trial vector with entries
\begin{equation}
   v_{n} = \left\{
   \begin{array}{r l}
      + \left( \frac{N_2}{N \, N_1} \right)^{1/2}  & \; \mbox{if} \;  n \in V_1,  \\
       - \left( \frac{N_1}{N \, N_2} \right)^{1/2}  & \; \mbox{if} \;  n \in V_2,
  \end{array} \right.
  \label{eq:trial-vector}
\end{equation}
where $N_1 = |V_1|$ and $N_2 = |V_2|$. Then we have
\begin{align}
    \lambda_2 &\le \frac{\vec v^\top \mathbf{L} \vec v}{
                       \vec v^\top \vec v} \nonumber \\
                    &= \frac{N_1+N_2}{N_1 \, N_2} \sum_{m\in V_1, n \in V_2} A_{m,n}   
                       \nonumber \\
                    &= \frac{N_1+N_2}{N_1 \, N_2}  \sum_{e \in S} w_e , 
	\label{eq:fiedler-primal-bound}
\end{align}
where $S = \{  (m,n) \in E | m \in V_1, n \in V_2 \}$ is the cut-set corresponding to the cut $V = V_1 + V_2$.

We note that the expression \eqref{eq:fiedler-primal-bound} is not just an upper bound but gives $\lambda_2$ up to linear order in the topological connectivity. To make this more precise, assume that we start from a graph that is disconnected into two parts
$V_1$ and $V_2 = V \backslash V_2$. We label the vertices as $V_1 = \{1,\ldots,N_1\}$ and $V_2 = \{N_1+1,\ldots,N_2+N_2\}$. The adjacency matrix then assumes a block form
\begin{equation*}  
    \mathbf{A}^{(0)} = 
         \begin{pmatrix}
         \mathbf{A}_{11} & \mathbf{0} \\
         \mathbf{0} & \mathbf{A}_{22}
         \end{pmatrix},
\end{equation*}  
where $\mathbf{A}^{(0)}_1 \in \mathbb{R}^{N_1 \times N_1}$ and $\mathbf{A}^{(0)}_2 \in \mathbb{R}^{N_2 \times N_2}$. In this limit we have $\lambda_2^{(0)} = 0$. 
Now add some weak connections between the two modules of the graph. Then, the adjacency matrix can be written as 
\begin{equation*}  
    \mathbf{A} = \mathbf{A}^{(0)} + \mathbf{A}^{(1)} =
         \begin{pmatrix}
         \mathbf{A}_{11} & \mathbf{0} \\
         \mathbf{0} & \mathbf{A}_{22}
         \end{pmatrix} + 
         \begin{pmatrix}
         \mathbf{0} & \mathbf{A}_{12} \\
         \mathbf{A}_{12}^\top & \mathbf{0}
         \end{pmatrix} 
\end{equation*}  
We now treat $\mathbf{A}^{(1)}$ as a small perturbation and apply Rayleigh-Schrödinger perturbation theory. To zeroth order we have $\lambda_2^{(0)} = 0$ and the corresponding eigenvector $\vec v_2^{(0)}$ coincides with the trial vector defined by Eq.~\eqref{eq:trial-vector}.
To linear order, the Fiedler value is then given by
\begin{align*}
   \lambda_2 = \lambda_2^{(0)} + 
       \vec v_2^{(0) \top} \mathbf{L}^{(1)} \vec v_2^{(0)} 
       + \mathcal{O}\left( \mathbf{A}_{12}^2  \right) \\
    = \frac{N_1+N_2}{N_1 \, N_2}  \sum_{e \in S} w_e 
     + \mathcal{O}\left( \mathbf{A}_{12}^2  \right) ,
\end{align*}
where $S = \{  (m,n) \in E | m \in V_1, n \in V_2 \}$ is the cut-set corresponding to the cut $V = V_1 + V_2$.
  
\subsubsection*{Dual Graph}

We can now generalize the above treatment to the dual graph. As discussed above, we base the analysis of communities and connectivity on the dual graph that excludes the exterior face. The Laplacian of the dual graph is defined by \eqref{eq:Laplacian-Dual-DA-i} or its weighted counterpart. In particular, we have 
\begin{equation*}
    \mathbf{L}_i^* = \mathbf{D}^* - \mathbf{A}^*.
\end{equation*}
where the entries of the adjancency matrix are given by
\begin{equation}
     A^*_{f,g} = \sum_{\mbox{edges $e$ shared by faces $f$ and $g$}} 
     \frac{1}{w_e} ,
     \label{eq:adjacency-dual}
\end{equation}
and $D^*_{ff} = \sum_g A^*_{fg}$.

As before, we want to define a decomposition of the graph and use the connectivity at the boundary to derive a bound for the algebraic connectivity. However, we now need the weights \emph{along} the boundary, not \emph{across} the boundary. To make this precise, we first define a cut-path that we will use instead of a cut-set.

\begin{definition}
Let $G = (V,E)$ be a connected plane graph. The outer boundary $\mathcal{B}$ is defined as the cyclic path consisting of the edges adjacent to the exterior face and the connecting vertices.
\end{definition}

\begin{definition}
Let $G = (V,E)$ be a connected plane graph. A cut-path is a path
\begin{equation}
    p = (v_1,e_1,v_2, \ldots, v_{n+1})
\end{equation}
that includes no edges from the outer boundary $\mathcal{B}$ and that satisfies
\begin{enumerate}
\item either the path is cyclic and contains at most one vertex from the boundary $\mathcal{B}$
\item or the path is acyclic and its start and end are on the boundary, $v_1, v_{n+1} \in \mathcal{B}$.
\end{enumerate}
\end{definition}

We will now show that such a cut-path $p$ indeed cuts the dual graph into two parts. 
Then we show that the sum of weights $w_e^{-1}$ along the cut-path provides an upper bound  for the algebraic connectivity of Fiedler value of the dual.

\begin{lemma}
\label{lem:dual-cut}
Let $G = (V,E)$ be connected plane graph and $G^* = (V^*,E^*)$ its dual graph. We consider a cut of the dual $G^*$, i.e.~a decomposition of the vertex set $V^*$ such that
\begin{equation}
   V_1^* \cup V_2^* = V^* \qquad \mbox{and} \qquad V_1^* \cap V_2^* = \emptyset.
\end{equation}
Then the following statements are equivalent
\begin{enumerate}
\item
The two dual vertex sets $V^*_1$ and $V^*_2$ are connected in $G^*$
\item
The cut set 
\begin{equation}
    S^* = \{  e^* = (v^*,u^*) \in E^* | v^* \in V_1^*, u^* \in V_2^* \}
\end{equation}
is minimal.
\item
The dual of the cut set
\begin{align}
    S &= \{  e \in E | e \mbox{ is dual to an } e^* \in S^*   \}  \nonumber \\
       &= \{  e \in E | e \mbox{ is adjacent to a face } f \in V_1^* 
           \mbox{ and a face } g \in V_2^*  \} .
\end{align}
is the edge set of a cut-path.
\end{enumerate}
\end{lemma}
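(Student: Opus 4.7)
The plan is to prove the equivalences in three pieces: first the purely graph-theoretic equivalence $(1) \Leftrightarrow (2)$, then the topologically flavoured equivalence $(1) \Leftrightarrow (3)$ using the planar embedding and a Jordan-curve argument that is in the spirit of the classical duality between bonds in the dual and cycles in the primal.

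For $(1) \Leftrightarrow (2)$, which holds in any graph, the forward direction uses that $G^* - S^*$ decomposes precisely into the two components $G^*[V_1^*]$ and $G^*[V_2^*]$, so putting any edge $e^* \in S^*$ back (its endpoints lie in different $V_i^*$) reconnects the graph, and hence $S^* \setminus \{e^*\}$ fails to be a cut; this gives minimality. Conversely, if some $V_i^*$ were disconnected, a proper connected component $C \subsetneq V_i^*$ of $G^*[V_i^*]$ yields the set of edges from $C$ to $V^* \setminus C$, which is a strict subset of $S^*$ that still separates $V^*$, contradicting minimality.

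For $(1) \Leftrightarrow (3)$, I would exploit the planar embedding. Let $\bar D$ denote the closed bounded region enclosed by the outer boundary $\mathcal B$. Each face $f \in V^*$ is a closed polygonal region, and setting
\[
A_i \;=\; \bigcup_{f \in V_i^*} \bar f, \qquad i = 1,2,
\]
gives $A_1 \cup A_2 = \bar D$. The primal edges lying in $\partial A_1 \cap \partial A_2 \cap D$ are exactly the members of $S$, and none of them lies on $\mathcal B$ because a boundary edge borders only one interior face. Moreover, $V_i^*$ is connected in $G^*$ iff $A_i$ is a topologically connected patch of $\bar D$ whose interior is path-connected in the plane (in particular $V_i^*$ cannot split into sub-patches that touch only at isolated vertices). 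For $(1) \Rightarrow (3)$ the Jordan curve theorem, applied inside the topological disk $\bar D$, then shows that the shared boundary of the two patches is a single simple curve $\gamma$ which falls into exactly three geometric cases: a cycle in the open interior $D$ with no boundary-vertex contact, a cycle touching $\mathcal B$ at exactly one vertex, or a simple arc whose only $\mathcal B$-contacts are its two endpoints; these match the three clauses of the cut-path definition. Conversely for $(3) \Rightarrow (1)$, given a cut-path $p$ with edge set $S$, concatenating $p$ with (nothing, a single boundary vertex, or a boundary arc of) $\mathcal B$ produces a simple Jordan curve in $\bar D$ that bounds two closed regions; the interior faces in each region form a set of dual vertices that is connected in $G^*$, because any two faces in the same region can be linked by a chain of shared edges avoiding $S$.

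The main obstacle is the topological step in $(1) \Rightarrow (3)$, namely ruling out that the common boundary is a more complicated $1$-complex such as a figure-eight, two parallel curves, or a curve with pinch points. The combinatorial heart of this step is the claim that no interior vertex of $G$ has $S$-degree greater than two: a higher-valent alternation between $V_1^*$- and $V_2^*$-faces in the cyclic face arrangement around an interior vertex $v$ would create two disjoint local sectors of $A_1$ (and symmetrically of $A_2$) meeting only at $v$, and reconnecting the $V_1^*$-sectors by a global detour in the simply-connected disk $\bar D$ would unavoidably separate the $V_2^*$-sectors, contradicting the assumed connectedness of $V_2^*$ in $G^*$. An analogous count at boundary vertices controls the number and location of endpoints of $\gamma$ and pins down which of the three cut-path cases occurs.
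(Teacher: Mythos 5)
Your proposal is correct and follows essentially the same route as the paper: the standard minimal-cut/connectedness equivalence for $(1)\Leftrightarrow(2)$ (which the paper simply cites from Diestel), and for $(1)\Leftrightarrow(3)$ the identification of $V_{1,2}^*$ with plane regions $A_{1,2}$ whose common boundary realizes $S$, with the converse obtained by closing the cut-path along the outer boundary $\mathcal{B}$ to enclose two polygons. If anything you are more careful than the paper, which asserts without argument that $\partial A_1 \cap \partial A_2$ is a single connected polygonal chain with the stated boundary contacts --- exactly the point your degree-$\le 2$ alternation argument at each vertex and the Jordan-curve separation argument are meant to settle.
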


\begin{proof}
The equivalence of (1) and (2), i.e.~the correspondence of connectedness and minimality of the cut set, is a standard result in graph theory (see, e.g., \cite[Lemma 1.9.4.]{Dies10}).

The equivalence of (1,2) and (3) follows from the geometry of the plane embedding: A vertex $v\in V$ corresponds to a point in the plane $\mathbb{R}^2$, an edge to an arc and a face to a closed polygon. An exact proof for dual graphs that include the exterior face can be found in many text books of graph theory, see e.g.~\cite[Proposition 4.6.1.]{Dies10}. Here we briefly sketch how to extend this result to dual graphs that exclude the outer face. 

$(1) \Rightarrow (3)$  
Consider the embedding of the graph in the plane $\mathbb{R}^2$. Then all faces of the graph define a connected area $A \subset \mathbb{R}^2$, whose boundary $\partial A$ corresponds to the edges in the boundary set $\mathcal{B}$. Similar, the faces in $V_{1,2}^*$ correspond to areas $A_{1,2} \subset \mathbb{R}$ such that $A_1 \cup A_2 = A$. As $V_{1,2}^*$ are connected by assumption, so are the areas $A_{1,2}$. Examples of the possible geometries are shown in Supplementary Figure \ref{fig:dual-cut}.
 
The boundary between the two areas $\partial A_{1,2} = A_1 \cap A_2$ is a connected polygonal chain. It contains at most two points in the boundary $\partial A$ of the area $A$, at most one if it is closed, cf.~Supplementary Figure \ref{fig:dual-cut}. Now the boundary coincides with the set
\begin{align*}
    S &=\{  e \in E | e \mbox{ is adjacent to a face } f \in V_1^* 
           \mbox{ and a face } g \in V_2^*  \} .
\end{align*}
which thus inherits the properties of $\partial A_{1,2}$: (i) As $\partial A_{1,2}$ is a connected polygonal chain, $S$ is the edge set of a path. (ii) As $\partial A_{1,2}$ contains at most two points in  $\partial A$, $S$ contains at most two vertices from $\mathcal{B}$, at most one if the path is cyclic.

$(3) \Rightarrow (1)$  
Let $S$ be the edge set of a cut-path $p$. We denote this cut-path as $p = (v_1,e_1,v_2, \ldots, v_{n+1})$ and distinguish two cases.
(a) If the cut-path $p$ is acyclic, its start $v_1$ and end $v_{n+1}$ are distinct and both lie on the boundary $\mathcal{B}$. We can thus write the boundary as
\begin{equation*}
     \mathcal{B} = (u_1 = v_1,\ell_1,\ldots,u_{a}=v_{n+1},\ell_a, u_{a+1},\ldots, u_{b+1}=u_1=v_1).
\end{equation*}
and define two cyclic paths
\begin{align*}
    \tilde p_1 &= (v_1,e_1, \ldots, v_{n+1} = u_a, \ell_a, u_{a+1},\ldots u_{b+1}=v_1), \\
    \tilde p_2 &= (v_1,e_1, \ldots, v_{n+1} = u_a, \ell_{a-1}, u_{a-1}, \ldots, u_1=v_1).
\end{align*}
In the plane embedding, $\tilde p_1$ and $\tilde p_2$ correspond to two closed polygon chains enclosing two polygons $A_{1,2}$. As the $\tilde p_{1,2}$ are cyclic paths, the polygons $A_{1,2}$ are connected. We can again identify the areas $A_{1,2}$ with the dual vertex sets $V^*_{1,2}$ and conclude the following statements.
(i) As $\tilde p_{1,2}$ are cyclic paths, the polygons $A_{1,2}$ are simply connected areas. Hence, the dual vertex sets $V^*_{1,2}$ are connected.
(ii) As $\tilde p_{1,2}$ contain the boundary $\mathcal{B}$, we have $A = A_1 \cup A_2$ and hence $V^* = V_1^* \cup V_2^*$. 
(iii) As the paths $\tilde p_{1,2}$ do not intersect, the two polygons are disjoint, $A_1 \cap A_2 = \emptyset$, and so are the dual vertex sets, $V_1^* \cap V_2^* = \emptyset$.

(b) If the cut-path $p$ is cyclic, its start and end coincide and lie on the boundary $v_1=v_{n+1} \in \mathcal{B}$. 
In the plane embedding, the cut-path $p$ corresponds to a closed polygon chain enclosing a polygon $A_{1}$. As $p$ is a cyclic path, $A_1$ is connected. Notably, $A_1$ touches the boundary in exactly one point corresponding to the node $v_1=v_{n+1}$ according to the definition of a cut-path. The area $A_2 = A \backslash A_1$ must also be connected, otherwise $A_1$ would touch the boundary in more than one point. We can now identify the areas $A_{1,2}$ with the dual vertex sets $V^*_{1,2}$ and conclude that both $V^*_{1}$ and $V^*_{2}$ are connected and satisfy $V_1^* \cap V_2^* = \emptyset$ and $V^* = V_1^* \cup V_2^*$ as desired. 
\end{proof}

\begin{corollary}
Let $G = (V,E,W)$ be a weighted plane graph and
\begin{equation}
    p = (v_1,e_1,v_2, \ldots, v_{n+1})
\end{equation}
be a cut-path. Then the algebraic connectivity of the dual graph $G^*$ satisfies
\begin{equation}
   \lambda_2^* \le \frac{N^*_1+N^*_2}{N^*_1 \, N^*_2}  \sum_{e \in E(p)} w_e^{-1},
\end{equation}
where $E(p) = \{ e_1, \ldots, e_n \}$ is the set of edges in the cut-path and $N_{1,2}^* = |V_{1,2}^*|$
\end{corollary}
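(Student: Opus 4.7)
The plan is to mirror the primal argument given in Eq.~\eqref{eq:fiedler-primal-bound}, replacing the cut-set of the primal by the cut-set of the dual induced by the cut-path, and exploiting the fact (derived in the previous supplementary note) that dual weights are inverses of primal weights.

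First, I would invoke Lemma~\ref{lem:dual-cut}: the cut-path $p$ determines a partition $V^* = V_1^* \cup V_2^*$ with $V_1^* \cap V_2^* = \emptyset$ into two connected components of the dual, and the edge set $E(p) = \{e_1,\ldots,e_n\}$ coincides with the set of primal edges that are adjacent to one face in $V_1^*$ and one in $V_2^*$. Hence, by the weighted dual adjacency formula \eqref{eq:adjacency-dual},
\begin{equation*}
   \sum_{f\in V_1^*,\,g\in V_2^*} A^*_{f,g} \;=\; \sum_{e\in E(p)} w_e^{-1}.
\end{equation*}

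Next, I would apply the variational characterization
\begin{equation*}
   \lambda_2^* \;=\; \min_{\vec u \perp \vec 1} \frac{\vec u^\top \mathbf{L}_i^* \vec u}{\vec u^\top \vec u}
\end{equation*}
and plug in the trial vector analogous to \eqref{eq:trial-vector}, namely
\begin{equation*}
   u_f \;=\; \begin{cases} +\bigl(N_2^*/(N^* N_1^*)\bigr)^{1/2} & f \in V_1^*,\\ -\bigl(N_1^*/(N^* N_2^*)\bigr)^{1/2} & f \in V_2^*,\end{cases}
\end{equation*}
with $N^* = N_1^* + N_2^*$. A direct computation gives $\vec u^\top \vec u = 1$ and $\vec 1^\top \vec u = 0$, so $\vec u$ is admissible in the variational principle. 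Using $\vec u^\top \mathbf{L}_i^* \vec u = \tfrac{1}{2}\sum_{f,g} A^*_{f,g}(u_f-u_g)^2$, only terms with $f$ and $g$ on opposite sides of the cut survive, and for any such pair $(u_f-u_g)^2 = (N_1^*+N_2^*)/(N_1^* N_2^*)$. Combining this with the identity from Lemma~\ref{lem:dual-cut} yields the claimed bound.

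The substantive content is contained in Lemma~\ref{lem:dual-cut}, which identifies cut-paths in the primal with cut-sets in the dual; once this is available, the proof is a routine dual analogue of the primal Rayleigh-quotient argument, and I do not anticipate genuine obstacles beyond verifying that the grounded Laplacian $\mathbf{L}_i^*$ admits $\vec 1$ in its kernel (which is explicit from Eqs.~\eqref{eq:Laplacian-Dual-DA-i} and the definition of $D^*_{f,f}$) and that the weighted dual adjacency in \eqref{eq:adjacency-dual} correctly produces the inverse primal weights along the cut-path. As in the primal case, one could additionally remark that the bound is tight to linear order in a perturbative sense: if the primal weights along a cut-path are scaled to infinity (equivalently, dual edge weights along the cut go to zero), $\lambda_2^*$ vanishes and the corresponding zeroth-order eigenvector of $\mathbf{L}_i^*$ is precisely the trial vector above, so Rayleigh–Schrödinger perturbation theory recovers the inequality with equality at first order in $\{w_e^{-1}\}_{e\in E(p)}$.
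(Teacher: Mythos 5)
Your proposal is correct and follows essentially the same route as the paper: the Rayleigh-quotient bound with the two-valued trial vector, combined with Lemma~\ref{lem:dual-cut} and the weighted dual adjacency formula \eqref{eq:adjacency-dual} to identify $\sum_{f\in V_1^*, g\in V_2^*}A^*_{f,g}$ with $\sum_{e\in E(p)}w_e^{-1}$. You merely spell out the quadratic-form computation that the paper compresses into ``in full analogy to \eqref{eq:fiedler-primal-bound}.''
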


\begin{proof}
In full analogy to the analysis of the primal graph (cf.~\eqref{eq:fiedler-primal-bound}), we find the inequality
 \begin{equation*}
   \lambda_2^* \le \frac{N^*_1+N^*_2}{N^*_1 \, N^*_2} 
       \sum_{c \in V^*_1, d \in V^*_2} A^*_{c,d} ,
\end{equation*}
where $V^*_1$ and $V^*_2 = V^* \backslash V^*_2$is a decomposition of the vertex set of the dual and $N^*_{1,2} = |V^*_{1,2}|$. The entries of the dual adjacency matrix are given by Eq.~\eqref{eq:adjacency-dual} such that we have
 \begin{align*}
   \lambda_2^* &\le \frac{N^*_1+N^*_2}{N^*_1 \, N^*_2} 
       \sum_{c \in V^*_1, d \in V^*_2}  \quad 
       \sum_{\mbox{edges $e$ shared by faces $c$ and $d$}} 
        \; w_e^{-1} \\
        &= \frac{N^*_1+N^*_2}{N^*_1 \, N^*_2} 
       \sum_{e \in S}  w_e^{-1} \\
\end{align*}
where the last equality follows from Lemma \ref{lem:dual-cut}.
\end{proof}

\clearpage
          
\refstepcounter{notecounter}
\subsection*{Supplementary Note \thenotecounter:~Linear flow networks}
\label{ap:linear_flow}

Linear flow networks describe a generic model for many types of supply networks including AC power grids \cite{Wood14,Purc05,Hert06}, DC electric circuits~\cite{bollobas1998,Dorfler2018,Kirchhoff1847}, hydraulic networks~\cite{Hwan96,diaz2016}, and vascular networks of plants~\cite{Kati10}. Assume that the underlying network is given by a graph $G=(E,V)$ with $N=|V|$ nodes and $M=|E|$ edges. Then we assign a potential $\theta_n\in\mathbb{R}$ to each node $n\in V$. This potential has the following interpretation: In AC power grids in the linear power flow approximation, it denotes the voltage phase angle whereas in DC electric circuits, it refers to the voltage level at node $n$. Finally, in hydraulic or vascular networks, $\theta_n$ denotes the pressure at node $n$. Then the flow $F_\ell$ along a link $\ell = (m,n)$ is given by
\begin{equation}
    F_\ell = w_{\ell} (\theta_m - \theta_n),
    \label{eq:flow}
\end{equation}
where $w_{\ell}$ is the weight of the link $\ell$. In AC power grids in the linear power flow approximation, $w_\ell$ is proportional to the link's susceptance, in resistor networks it is given by the link's conductance and in a hydraulic or vascular network it depends on the geometry of the pipe or vein. Note that all these networks are undirected, i.e.~flow is possible in both directions alike, such that $w_{m,n} = w_{n,m}$. Since the flow \eqref{eq:flow} depends  only on the potential drop, the potentials $\theta_n$ are only determined up to a constant phase shift applied to all nodes. This problem may be solved by assigning a reference potential to a preselected slack node $n$, e.g. setting $\theta_n:= 0$. 

Now assume that there is an inflow $P_m$ at every node $m \in V$. The value of $P_m$ is positive if a current, power, or fluid is injected to the node and negative if it is withdrawn from the node. Here and in the following, we assume that the in- and outflows sum to zero, $\sum_i^NP_i = 0$, and call them 'balanced'. The flows $F_\ell$ along the edges $\ell\in E$ may then be determined by using the continuity equation, also referred to as Kirchhoff's current law (KCL)
\begin{equation}
     \sum_{\ell \in E} \tilde{I}_{\ell,n} F_\ell = P_n, \qquad
    \forall \, n \in V.
    \label{eq:KCL}
\end{equation}
Here, $\tilde{I}_{\ell,n}$ are the entries of the graph's unweighted edge-node incidence matrix \eqref{eq:incidence-unweighted}.  The potentials $\theta_n$ which fulfil the continuity equation~(\ref{eq:KCL}) and the equation for the flows~(\ref{eq:flow}) automatically satisfy Kirchhoff's voltage law which states that the potential drop around any closed loop $\mathcal{C}$ needs to vanish
\begin{equation}
    \sum_{(n,m)\in\mathcal{C}} \theta_n-\theta_m = 0.\label{eq:KVL}
\end{equation}
Defining a vector of flows $\vec{F}=(F_1,..,F_M)^\top \in\mathbb{R}^M$, a vector of potentials $\vec{\theta}=(\theta_1,...,\theta_N)^\top\in\mathbb{R}^N$ and a vector of inflows $\vec{P}=(P_1,...,P_N)^\top\in\mathbb{R}^N$, we can write these relationships in a more compact form. Kirchhoff's current law~(\ref{eq:KCL}) in vectorized form reads as
\begin{equation*}
    \tilde{\mathbf{I}}^\top\vec{F}=\vec{P}.
\end{equation*}
In addition to that, we can write the relationship between flows and potentials in vectorized form,
\begin{equation}
    \vec{F} = \mathbf{W}\tilde{\mathbf{I}}\vec{\theta}.\label{eq:flow_potential}
\end{equation}
Here, $\mathbf{W}=\operatorname{diag}(w_1,...,w_L)\in\mathbb{R}^{M\times M}$ is a diagonal matrix summarizing the edge weights, cf.~Eq.~\ref{eq:weight-matrix}. Now we can put together the last two equations to arrive at a discrete Poisson equation for the nodal potentials
\begin{equation}
    \mathbf{L}\vec{\theta}= \vec{P}.\label{eq:poisson}
\end{equation}
Here, $\mathbf{L} = \tilde{\mathbf{I}}^\top \mathbf{W}\tilde{\mathbf{I}}$ is the Laplacian matrix. 

\subsubsection*{Quantifying the impact of perturbations - primal graph}

We now use the notation developed in the last section to study the effect of perturbations, in particular links failures, and derive the expression for the sensitivity factor $\eta_{i,k,\ell}$ given in the main text. Assume that the inflow at one node $i$ is increased by an amount $\Delta P$ and decreased at another node $k$ by the same amount. The inflow vector changes as $\vec{P} \rightarrow \vec{P}^\prime = \vec{P} +\Delta\vec{\theta}$ with
\begin{equation*}
    \Delta\vec{P} = \Delta P  (\vec e_i-\vec e_k),
\end{equation*}
where $\vec e_i$ denotes the $i$th standard unit vector. As a consequence, the nodal potentials changes as $\vec{\theta} \rightarrow \vec{\theta}^\prime = \vec{\theta} +\Delta\vec{ \theta}$. Both the new and the old potentials have to fulfil the Poisson equation~(\ref{eq:poisson}), Subtracting the two equations, we arrive at an explicit equation for the change in nodal potentials
\begin{equation*}
    \mathbf{L}\Delta\vec{\theta} = \Delta P(\vec e_i-\vec e_k).
\end{equation*}
Note that the Laplacian matrix of a connected graph always has one zero eigenvalue \cite{newman2010}. Thus, we cannot simply invert this equation to calculate the change in the nodal potentials $\Delta\vec{\theta}$. This problem is typically solved by making use of the Moore-Penrose pseudoinverse of the Laplacian matrix $\mathbf{L}^\dagger$ which has properties similar to the matrix inverse~\cite{Moore1920}.

Now we can use Kirchhoff's current law~(\ref{eq:KCL}) to calculate the resulting changes in the flows
\begin{equation*}
    \Delta\vec{ F} = \mathbf{W}\tilde{\mathbf{I}}\mathbf{L}^\dagger(\vec e_i-\vec e_k)\Delta P. 
\end{equation*}
Finally, we can read off the change of the flow $\Delta F_\ell$ on a line $\ell$. We define an indicator vector $\vec{l}_\ell$ as the $\ell$th standard unit vector in $\vec{R}^M$ and obtain
\begin{equation}
    \Delta F_\ell =\vec{l}_\ell^\top\Delta\vec{F} = w_\ell\vec{l}_\ell^\top \tilde{\mathbf{I}}\mathbf{L}^\dagger(\vec e_i-\vec e_k)\Delta P = \sqrt{w_\ell} \vec{l}_\ell^\top\mathbf{I}\mathbf{L}^\dagger(\vec e_i-\vec e_k)\Delta P ,
\end{equation}    
where we replaced the unweighted incidence matrix $\tilde{\mathbf{I}}$ by the weighted matrix $\mathbf{I}$. Then the sensitivity factor reads
\begin{equation}
    \eta_{i,k,\ell} = \frac{\Delta F_\ell}{\Delta P} = \sqrt{w_\ell} \vec{l}_\ell^\top\mathbf{I}\mathbf{L}^\dagger(\vec e_i-\vec e_k).
\end{equation}
Note that the sensitivity factor is purely topological, i.e. it may be calculated only based on the network topology and weights and does \emph{not} depend on the in- and outflows. If the two nodes $i$ and $k$ are the terminal nodes of an edge $p=(i,k)$, we can identify 
\begin{equation*}
    \vec e_i-\vec e_k = \tilde{\mathbf{I}}^\top\vec{l}_p,
\end{equation*}
and the sensitivity factor reads
\begin{align*}
    \eta_{p,\ell} &= w_\ell \vec{l}_\ell^\top\tilde{\mathbf{I}}\mathbf{L}^\dagger\tilde{\mathbf{I}}^\top\vec{l}_p
    = \sqrt{w_\ell/w_p} \vec{l}_\ell^\top {\mathbf{I}}\mathbf{L}^\dagger{\mathbf{I}}^\top\vec{l}_p.
\end{align*}

\subsubsection*{Quantifying the impact of perturbations - dual graph}
\label{sec:dual}

The discussion of perturbations in the in- and outflows developed in the last section was recently extended to the dual graph~\cite{17lodf,ronellenfitsch_dual_2017}. We will briefly cover this here and derive the dual expression for the sensitivity factor $\eta$.
As before, we assume that the inflow at one node $i$ is increased by an amount $\Delta P$ and decreased at another node $k$ by the same amount such $\Delta\vec{ P} = \Delta P  (\vec e_i-\vec e_k)$. By subtracting Kirchhoff's current law~(\ref{eq:KCL}) for the situation before and after the change in inflow, we arrive at
\begin{equation*}
    \tilde{\mathbf{I}}^\top \Delta\vec{F} = \Delta P  (\vec e_i-\vec e_k),
\end{equation*}
which leaves us with an underdetermined system of equations: We have $M$ unknown flow changes $\Delta F_\ell$, but only $N-1$ equations as one is redundant. The general solution can thus be written as 
\begin{equation}
   \Delta\vec{F} = \Delta\vec{ F}_{\text{part}} +  \Delta\vec{ F}_{\text{hom}} \, ,
    \label{eq:dual_flow_changes}
\end{equation}
where $\Delta\vec{F}_{\text{part}}$ is one particular solution of the equation and  $\Delta\vec{ F}_{\text{hom}}$ is an arbitrary solution of the homogeneous equation, i.e.~an arbitrary solution of $\tilde{\mathbf{I}}^\top \Delta\vec{F}_{\text{hom}}= \vec{0}$. We can now use graph duality to parametrize the homogeneous solutions and to single out the physically correct solution.

The kernel of the node-edge incidence matrix is spanned by the cycle-edge incidence matrix
\begin{equation*}
    \tilde{\mathbf{I}}^\top\tilde{\mathbf{C}} = 0,
\end{equation*}
where we here use the unweighted matrix \eqref{eq:cycle-in-unweighted}. Hence, we can write the homogeneous solutions as 
\begin{equation*}
     \Delta\vec{F}_{\text{hom}} = \tilde{\mathbf{C}}\vec{f},
\end{equation*}
where $\vec{f}\in\mathbb{R}^{M-N+1}$ is a vector of cycle flows -- one for each cycle.
The physically correct values of the cycle flows are then determined by Kirchhoff's voltage law (Eq.~\ref{eq:KVL}) which may be written compactly as
\begin{equation}
     \tilde{\mathbf{C}}^{\top}\mathbf{W}^{-1}\Delta\vec{ F} = 0,
     \label{eq:dual_flow_changes2}
\end{equation}
where we used that $\Delta \theta_n - \Delta \theta_m = \Delta F_{nm} / w_{nm}$. Inserting equations~\eqref{eq:dual_flow_changes} and \eqref{eq:dual_flow_changes2}, we then obtain
\begin{align}
       \tilde{\mathbf{C}}^{\top}\mathbf{W}^{-1}\tilde{\mathbf{C}}\vec{f} &= -\tilde{\mathbf{C}}^{\top}\mathbf{W}^{-1}\Delta\vec{ F}_{\text{part}}, \nonumber\\
           \Leftrightarrow\quad\mathbf{L}^*_g \vec{f}&= -\tilde{\mathbf{C}}^{\top}\mathbf{W}^{-1}\Delta\vec{ F}_{\text{part}},\label{eq:dual_poisson}
\end{align}
with the dual Laplacian $\mathbf{L}^*_g=\tilde{\mathbf{C}}^{\top}\mathbf{W}^{-1}\tilde{\mathbf{C}}$. We thus found a discrete Poisson equation for the cycle flows $\vec f$ with the grounded dual Laplacian $\mathbf{L}^*_g$ in direct correspondence to its primal counterpart~(\ref{eq:poisson}). We can now proceed as for the primal graph: Inverting the discrete Poisson equation for the cycle flows and plugging the result into Eq.~(\ref{eq:dual_flow_changes}), we thus arrive at
\begin{equation*}
    \Delta\vec{ F} = (\mathbf{1}_M -\tilde{\mathbf{C}}(\mathbf{L}^{*}_g)^{-1} \tilde{\mathbf{C}}^{\top}\mathbf{W}^{-1})\Delta\vec{ F}_{\text{part}}.
\end{equation*}
Here, $\mathbf{1}_M$ denotes the identity matrix of dimensions $M\times M$. As a last step, we need to determine a particular solution $\Delta\vec{ F}_{\text{part}}$. This can be accomplished as follows~\cite{ronellenfitsch_dual_2017}: Let $\mathcal{T}^{ik}\in\mathbb{R}^{M}$ be a vector determining an (arbitrary) path between the node $i$ with inflow $\Delta P$ and node $k$ with the same outflow. The entries of $\mathcal{T}^{ik}$ are given as follows 
\begin{equation*}
      \mathcal{T}_{e}^{ik} = \left\{
   \begin{array}{r l }
     1& \; \mbox{if edge $e$ is element of the path $i\rightarrow k$},  \\
     -1 & \; \mbox{if reversed edge $e$ is element of the path $i\rightarrow k$},  \\
      0     & \; \mbox{otherwise}.
   \end{array} \right.
\end{equation*}
Then a particular solution is given by $\Delta\vec{ F}_{\text{part}} = \mathcal{T}^{ik}\Delta P$, and we can plug this into the above equation
\begin{equation}
\begin{aligned}
            \Delta\vec{ F} &= (\mathbf{1}_M -\tilde{\mathbf{C}}(\mathbf{L}^*_g)^\dagger\tilde{\mathbf{C}}^{\top}\mathbf{W}^{-1})\mathcal{T}^{ik}\Delta P,\nonumber\\
            \Rightarrow \Delta F_\ell &=\vec{l}_\ell^\top\Delta\vec{ F} = \left(\vec{l}_\ell^\top\mathcal{T}^{ik}-\vec{l}_\ell^\top\tilde{\mathbf{C}}(\mathbf{L}^*_g)^\dagger\tilde{\mathbf{C}}^{\top}\mathbf{W}^{-1}\mathcal{T}^{ik}\right)\Delta P. \label{eq:Ptdf_dual_injections}
\end{aligned}
\end{equation}
If nodes $i$ and $k$ are the two terminal ends of an edge $e=(i,k)\neq \ell$, the path vector may be identified with the edge's indicator vector, $\mathcal{T}^{ik} = \vec{l}_e$. Then we can simplify the above expression for any edge $\ell \neq e$ to
\begin{equation*}
    \Delta F_{\ell} = -w_p^{-1}\vec{l}_\ell^\top\tilde{\mathbf{C}}(\mathbf{L}^*_g)^\dagger\tilde{\mathbf{C}}^{\top}\vec{l}_e\Delta P,
\end{equation*}
Notably, the first expression in Eq.~\eqref{eq:Ptdf_dual_injections} vanishes since $\vec{l}_\ell^\top\vec{l}_e = \delta_{\ell e}$. Thus, we can identify the sensitivity factor $\eta_{i,k,\ell}$ as
\begin{equation*}
    \eta_{i,k,\ell} = -w_e^{-1}\vec{l}_\ell^\top\mathbf{C}(\mathbf{L}^*_g)^\dagger\mathbf{C}^{\top}\vec{l}_e
    = -\sqrt{\frac{w_\ell}{w_e}}\vec{l}_\ell^\top\mathbf{C}(\mathbf{L}^*_G)^\dagger \mathbf{C}^{\top}\vec{l}_e \, .
\end{equation*}

\subsubsection*{Modelling link failures}

The sensitivity factor $\eta_{i,k,\ell}$ may also be used to describe the failure of links. Assume that a link $e$ fails, thus loosing its ability to carry flow and setting the weight to zero, $w_e=0$. Instead of removing the link from the network, we can find an analogous description based on the sensitivity factor~\cite{Wood14,strake2018}.  

Assume that the link $e=(e_1,e_2)$ carries a flow $F_e$ before the outage. Now we model the removal of the link by an in- and outflow at the two ends of the link: To this end, we assume that the line was disconnected from the network and consider a fictitious flow $\hat{F}_e$ which is the result of a (fictitious) inflow $\Delta P = \hat{F}_e$ at the starting node $e_1$ and an outflow of the same amount at the terminal node $e_2$ of link $e$. On the other hand, we can also calculate the flow $\hat{F}_e$ flowing on line $e$ after the injection by a self-consistency argument: It can be calculated using the sensitivity factor
\begin{equation*}
\begin{aligned}
        \hat{F}_e &= F_e +\eta_{e_1,e_2,e}\Delta P = F_e +\eta_{e_1,e_2,e}\hat{F}_e \\\Rightarrow \Delta P &= \hat{F}_e = F_e(1-\eta_{e_1,e_2,e})^{-1}.
\end{aligned}
\end{equation*}
Now we can calculate the change in the flow on another link $\ell$ due to the failure of link $e$ as 
\begin{equation*}
    \Delta F_{\ell} = \eta_{e_1,e_2,\ell} \Delta P = \frac{\eta_{e_1,e_2,\ell}}{1-\eta_{e_1,e_2,e}}F_e.
\end{equation*}
We can thus calculate the flow change $\Delta F_{\ell}$ on any link $\ell$ due to the failure of another link $e$ based on the initial flow on link $e$ and the sensitivity factor $\eta$. The expression $\eta_{e_1,e_2,\ell}(1-\eta_{e_1,e_2,e})^{-1}$ is known as \textit{Line Outage Distribution Factor} in power system security analysis~\cite{Wood14}. Thus, both changes in the inflows and link failures can be captured by the sensitivity factor $\eta$. 

To quantify the effect of (dual) communities on network robustness in linear flow networks we defined the ratio of flow changes in the main text (cf. Ref.~\cite{Kaiser2019}),
\begin{equation*}
    R(\ell,d) = \frac{\langle|\Delta F_{k} | \rangle_d^{k\in \text{O}}}{\langle
  |\Delta F_{r} |
  \rangle_d^{r\in \text{S}}}.
\end{equation*}
Importantly, the ratio can be used to quantify both, the effect of communities on changes in the inflow patterns and on failure spreading.
For an inflow and outflow of $\Delta P$ at the two terminal ends of the link $e$, $e_1$ and $e_2$, respectively, the ratio is calculated as
\begin{equation*}
      R(e,d) = \frac{\langle|\Delta F_{k} | \rangle_d^{k\in \text{O}}}{\langle
  |\Delta F_{r} |
  \rangle_d^{r\in \text{S}}} =\frac{\langle| \eta_{e_1,e_2,k}\Delta P | \rangle_d^{k\in \text{O}}}{\langle
  |\eta_{e_1,e_2,r}\Delta P  |
  \rangle_d^{r\in \text{S}}} =\frac{\langle| \eta_{e_1,e_2,k} | \rangle_d^{k\in \text{O}}}{\langle
  |\eta_{e_1,e_2,r} |
  \rangle_d^{r\in \text{S}}}.
\end{equation*}
On the other hand, if we instead calculate the ratio for the failure of a link $e$ with initial flow $F_e$, we arrive at
\begin{equation}
      R(e,d) = \frac{\langle|\Delta F_{k} | \rangle_d^{k\in \text{O}}}{\langle
  |\Delta F_{r} |
  \rangle_d^{r\in \text{S}}} =\frac{\langle| \eta_{e_1,e_2,k}(1-\eta_{e_1,e_2,e})^{-1}F_e | \rangle_d^{k\in \text{O}}}{\langle
  |\eta_{e_1,e_2,r}(1-\eta_{e_1,e_2,e})^{-1}F_e  |
  \rangle_d^{r\in \text{S}}} =\frac{\langle| \eta_{e_1,e_2,k} | \rangle_d^{k\in \text{O}}}{\langle
  |\eta_{e_1,e_2,r} |
  \rangle_d^{r\in \text{S}}}.\label{eq:flow_ratio_sensitivity}
\end{equation}
Thus, in both cases, the ratio is determined by the sensitivity factor $\eta$ which is in turn governed by the Pseudo-inverse of the Laplacian matrix $\mathbf{L}^\dagger$ or the Pseudo-inverse of the dual Laplacian $(\mathbf{L}^*_g)^{\dagger}$ when formulating the problem in the dual graph.

\subsubsection*{Connectivity structure determines network response to link failures}
\label{sec:perturb}

In this section, we will demonstrate why a weak connection between two components of a network limits the flow changes in one component when a link in the other one fails. Our analysis follows the approach towards perturbation spreading used in Manik et al.~\cite{Manik2017,16redundancy} that is based on Rayleigh-Schr\"odinger perturbation theory~\cite{Ball98}. 

Consider a connected graph $G=(E,V)$ with $N$ nodes and $M$ edges consisting of two subgraphs $G_1=(E_1,V_1)$ and $G_2=(E_2,V_2)$ with $n_1$ nodes and $n_2=N-n_1$ nodes, respectively, that are mutually weakly connected. Here, a weak connection between the two subgraphs may either be realized through a (relatively) weak number of links in case of an unweighted graph or the overall weight of the connections between the two subgraphs being small~\cite{Dies10}.
We then sort the graph's vertices $V$ in such a way that the first $n_1$ vertices belong to the first subgraph $G_1$ and the other $n_2$ vertices belong to the second one $G_2$. We may regard the weak connections between the to subgraphs as a perturbation to the graph in which the two subgraphs are disconnected. In terms of the graph Laplacian $\mathbf{L}$, we thus write 
\begin{equation}
    \begin{aligned}
    \mathbf{L}=\mathbf{L}_0 + \tilde{\mathbf{L}}=
    \begin{pmatrix} 
    \mathbf{L}_1& \mathbf{0}_{n_1\times n_2}\\ \mathbf{0}_{n_2\times n_1} &\mathbf{L}_2
    \end{pmatrix}
    +
    \begin{pmatrix} 
    \mathbf{D}_{12}& -\mathbf{A}_{12}\\ -\mathbf{A}_{12}^\top &\mathbf{D}_{21}
    \end{pmatrix} .
\end{aligned}\nonumber
\end{equation}
Here, $\mathbf{L}_0$ is the graph Laplacian for the graph when disconnecting the two subgraphs, expressed in terms of their Laplacian matrices $\mathbf{L}_1\in\mathbb{R}^{n_1\times n_1}$ and $\mathbf{L}_2\in\mathbb{R}^{n_2\times n_2}$ and $\tilde{\mathbf{L}}$ is the perturbation matrix with the diagonal degree matrices $\mathbf{D}_{12}$ and $\mathbf{D}_{21}$ denoting the degree for the graph connecting the two subgraphs. $\mathbf{A}_{12}$ is the adjacency matrix for this graph that is assumed to be relatively sparse, thus indicating the weak inter-subgraph connections.

To examine the effect of link failures, we need to study the pseudoinverse of this matrix. Therefore, we denote by $\mathbf{X}=\mathbf{L}^\dagger$ the Moore-Penrose pseudoinverse of the overall graph Laplacian and by $\mathbf{X}_1=\mathbf{L}_1^\dagger$ and $\mathbf{X}_2=\mathbf{L}_2^\dagger$ the Moore-Penrose pseudoinverses of the subgraphs' Laplacian matrices $\mathbf{L}_1$ and $\mathbf{L}_2$, respectively. For the inverse $\mathbf{X}_0$ of the unperturbed Laplacian $\mathbf{L}_0$, we then get 
\begin{equation}
    \begin{aligned}
    \mathbf{X}_0=\begin{pmatrix} \mathbf{X}_1& \mathbf{0}_{n_1\times n_2}\\ \mathbf{0}_{n_2\times n_1} &\mathbf{X}_2\end{pmatrix}.
\end{aligned}\nonumber
\end{equation}
In order to calculate the matrix inverse $\mathbf{X}$, we can expand this matrix using the Neumann series (see e.g. Ref.~\cite{ben-israel_generalized_2003}). We note that some issues may arise in this perturbative treatment and the series expansion, as the Moore-Penrose pseudo-inverse is not continuous in general.
However, these this does not apply here, since we can easily circumvent the formal problems by fixing the phase of an arbitrarily chosen slack node $n$ as $\theta_n \equiv 0$ and removing the $n$-th row from the linear set of equations. The resulting set of equations is of full rank and admits a unique solution. The corresponding matrices are again grounded Laplacians~\cite{miekkala_graph_1993} which are invertible.
The matrix inverse is continuous and admits a series expansion as in Equation~\eqref{eq:expansion}.
For the sake of notational simplicity and coherence, we will however stick to the ordinary Laplacians. The matrix inverse $\mathbf{X}$ then reads
\begin{equation}
    \begin{aligned}
    \mathbf{X}&=(\mathbf{L}_0 + \tilde{\mathbf{L}})^{\dagger} =\left[\mathbf{L}_0\left(\mathbf{1} + \mathbf{X}_0\tilde{\mathbf{L}}\right)\right] ^{\dagger}=\left[\left(\mathbf{1} + \tilde{\mathbf{L}}\mathbf{X}_0\right)\mathbf{L}_0\right] ^{\dagger}\nonumber\\
    &=\mathbf{X}_0\left[\left(\mathbf{1} - (-\mathbf{X}_0\tilde{\mathbf{L}})\right)\right] ^{\dagger}= \mathbf{X}_0\sum_{k=1}^\infty (-1)^k(\tilde{\mathbf{L}}\mathbf{X}_0)^k,\label{eq:expansion}
\end{aligned}
\end{equation}
where we inserted the Neumann series in the last step. We can thus approximate the matrix inverse of the graph Laplacian as 
\begin{equation}
    \begin{aligned}
    \mathbf{X} = \mathbf{X}_0 - \mathbf{X}_0\tilde{\mathbf{L}}\mathbf{X}_0 + \mathcal{O}(\tilde{\mathbf{L}}^2),
\end{aligned}\nonumber
\end{equation}
where $\mathcal{O}(\tilde{\mathbf{L}}^2)$ denotes terms of at least order two in the perturbation matrix $\tilde{\mathbf{L}}$.

Now we can use these expressions to calculate a first order approximation for the sensitivity factor $\eta_{i,k,\ell}$ in the weakly connected limit. Assume we are monitoring the flow changes on line $\ell$ with indicator vector $\vec{l}_\ell$ and define $\vec{\nu}_\ell=\mathbf{I}\cdot \vec{l}_\ell = \vec{e}_{\ell_1}-\vec e_{\ell_2}$ as a result of a power transfer along line $k=(k_1,k_2)$ with indicator vector $\vec{l}_k$ and  $\vec{\nu}_k= \mathbf{I}\cdot \vec{l}_k$. We can then calculate the sensitivity factor as
\begin{equation}
    \begin{aligned}
    \eta_{k_1,k_2,\ell} &=w_{\ell}\vec \nu_{\ell}^\top\mathbf X \vec \nu_{k}.
\end{aligned}\nonumber
\end{equation}
Now we distinguish two cases. First assume that $\ell$ and $k$ are contained in the same subgraph, say $G_1$. In this case, we can write with slight abuse of notation  $\vec{\nu}_\ell=(\hat{\vec{\nu}}_\ell,\vec{0}_{n_2})^\top$ and $\vec{\nu}_k=(\hat{\vec{\nu}}_k,\vec{0}_{n_2})^\top$, where $\hat{\vec{\nu}}_l \in \mathbb{R}^{n_1}$ denotes the projection of the vector onto the subspace of vertices the first subgraph $G_1$ and the whole vector $\vec{\nu}_\ell$ is still understood as a vector in $\mathbb{R}^N$. In this case, we may write the sensitivity factor as
\begin{equation}
    \begin{aligned}
    \eta_{k_1,k_2,\ell}&=w_{\ell}(\hat{\vec{\nu}}_\ell^\top,\vec{0}_{n_2}^\top)\mathbf X     
    \begin{pmatrix}
\hat{\vec{\nu}}_k\\
\vec{0}_{n_2}
    \end{pmatrix} 
    =w_\ell\hat{\vec{\nu}}_\ell^\top\mathbf{X}_1\hat{\vec{\nu}}_k +\mathcal{O}(\tilde{\mathbf{L}}).
    \label{eq:PTDF_sc}
\end{aligned}
\end{equation}
Now consider the case where $\ell$ and $k$ are contained in different modules of the network. In this case, we may write $\vec{\nu}_\ell=(\hat{\vec{\nu}}_\ell,\vec{0}_{n_2})^\top$ and $\vec{\nu}_k=(\vec{0}_{n_1},\hat{\vec{\nu}}_k)^\top$ and calculate the sensitivity factor as 
\begin{equation}
    \begin{aligned}
    \eta_{k_1,k_2,\ell}&=w_\ell(\hat{\vec{\nu}}_\ell^\top,\vec{0}_{n_2}^\top)\mathbf X
     \begin{pmatrix} 
    \vec{0}_{n_1}\\
    \hat{\vec{\nu}}_k
  \end{pmatrix} \nonumber\\
    &=w_\ell(\hat{\vec{\nu}}_\ell^\top,\vec{0}_{n_2}^\top)\left[\mathbf{X}_0+\mathbf{X}_0\tilde{\mathbf{L}}\mathbf{X}_0\right]\begin{pmatrix} 
    \vec{0}_{n_1}\\
    \hat{\vec{\nu}}_k
  \end{pmatrix} +\mathcal{O}(\tilde{\mathbf{L}}^2)\nonumber\\
    &=w_\ell\left[0+(\hat{\vec{\nu}}_\ell^\top\mathbf{X}_1,\vec{0}_{n_2}^\top)\tilde{\mathbf{L}}
    \begin{pmatrix}
    \vec{0}_{n_1}\\
    \mathbf{X}_2\hat{\vec{\nu}}_k
    \end{pmatrix}\right]+\mathcal{O}(\tilde{\mathbf{L}}^2).
    \label{eq:PTDF_oc}
\end{aligned}
\end{equation}
We thus observe that the leading order contribution of the perturbation matrix to the sensitivity factor is $\tilde{\mathbf{L}}^0$ if both links are contained in the same module and $\tilde{\mathbf{L}}^1$ if they are contained in different modules.

Now we compare this to the scaling of Laplacian eigenvalues with the perturbation matrix. The first eigenvector of $\mathbf{L}_0$ is given by the constant shift $\vec{v}_1=N^{-1/2}\vec{1}_N$ and has eigenalue $\lambda_1=0$. In the weakly connected limit considering only $\mathbf{L}_0$, the Fiedler eigenvalue vanishes as well $\lambda_2^{(0)}=0$, and has an associated eigenvector~\cite{Manik2017}
\begin{equation}
    \begin{aligned}
    \vec{v}_2^{(0)}=N^{-1/2}(\underbrace{\sqrt{n_2/n_1},...,\sqrt{n_2/n_1}}_{n_1 \text{ times}},\underbrace{-\sqrt{n_1/n_2},...,-\sqrt{n_1/n_2}}_{n_2 \text{ times}})^\top.
\end{aligned}\nonumber
\end{equation}
Here, we use the superscript $(0)$ to denote the eigenvector and eigenvalue in the unperturbed case. A first order estimate for the Fiedler value may thus be calculated by using Rayleigh Schrödinger perturbation theory as
\begin{equation}
    \begin{aligned}
    \lambda_2^{(1)}=(\vec{v}_2^{(0)})^\top\tilde{\mathbf{L}}\vec{v}_2^{(0)}+\mathcal{O}(\tilde{\mathbf{L}^2}).
    \label{eq:Fiedler_estimate}
\end{aligned}
\end{equation}
For weakly connected graphs, we thus expect the sensitivity factor $\eta_{\ell_1,\ell_2,k}$ to scale with the connectivity of the graph in the same way as the Fiedler value of the graph if $\ell$ and $k$ lie in different communities due to the fact that $\lambda_2^{(1)}\varpropto\tilde{\mathbf{L}}$ and expect it to be to leading order independent of the Fiedler value if $\ell$ and $k$ are in the same community.

\subsubsection*{Flow ratio scales with increasing connectivity between weakly connected modules}

In the main part of the manuscript, we consider the ratio $R(\ell,d)$ between the flow changes $\Delta F_{i\rightarrow j}$ within the other community $(i,j) \in \text{O}$ and the flow changes in the same community $(i,j) \in \text{S}$ after a link failure. It is well established that the flow changes typically decay with distance to the failing link~\cite{strake2018,Kett15,Labavic2014,Jung2015}. To be able to neglect this effect on the flow changes, we take the average absolute flow changes at a fixed distance $d$ denoted by $\langle|\Delta F_{i\rightarrow j}|\rangle_{d}^{(i,j)\in \text{S}}$ for the same community and $\langle|\Delta F_{i\rightarrow j}|\rangle_d^{(i,j)\in \text{O}}$ for the other community where we average over all edges $\ell=(i,j)$ within the respective community that are located at an unweighted edge distance of $d$ to the failing link. With this formalism at hand, we can now estimate the scaling of this ratio with the strength of the perturbation. Suppose that link $k=(r,s)$ is failing. Then the flow ratio reads (cf. Eq.~\eqref{eq:flow_ratio_sensitivity})
\begin{equation*}
    \begin{aligned}
R(\ell,d)=\frac{\langle| \eta_{\ell_1,\ell_2,k} | \rangle_d^{k\in \text{O}}}{\langle
  |\eta_{\ell_1,\ell_2,r} |
  \rangle_d^{r\in \text{S}}}.
\end{aligned}
\end{equation*}
 Now we can make use of our results on the scaling of the sensitivity factors with increasing connectivity between the subnetworks. From Eqs.~\eqref{eq:PTDF_sc} and,~\eqref{eq:PTDF_oc} we see that this ratio scales with the first order of the perturbation
\begin{equation}
    \begin{aligned}
    &R(\ell,d) = \frac{\left\langle\left|w_\ell(\hat{\vec{\nu}}_\ell^\top\mathbf{X}_1,\vec{0}_{n_2}^\top)\tilde{\mathbf{L}}
    \begin{pmatrix}
    \vec{0}_{n_1}\\
    \mathbf{X}_2\hat{\vec{\nu}}_k
    \end{pmatrix}\right|\right\rangle_d^{k=(i,j)\in \text{O}}+\mathcal{O}(\tilde{\mathbf{L}}^2)}{\left\langle\left|w_\ell\hat{\vec{\nu}}_\ell^\top\mathbf{X}_1\hat{\vec{\nu}}_r \right|\right\rangle_d^{r=(p,q)\in \text{S}}+\mathcal{O}(\tilde{\mathbf{L}})}\\
    \Rightarrow &R\varpropto \tilde{\mathbf{L}} +\mathcal{O}(\tilde{\mathbf{L}}^2)
\end{aligned}\nonumber
\end{equation}
in the weakly connected limit assuming without loss of generality that the failing link is located in the first subgraph $G_1$, $\ell\in E(G_1)$. Since we deduced in Eq.~\eqref{eq:Fiedler_estimate} that the Fiedler value scales with the perturbation in the first order as well, we expect the two quantities to show a similar scaling with the perturbation matrix $\tilde{\mathbf{L}}$. 

Note that the derivation works exactly the same way for the dual graph $G^*$ and dual Fiedler value $\lambda_2^*$: We simply have to replace the sensitivity factor $\eta_{\ell_1,\ell_2,k}$ by its dual representation.

\clearpage

\refstepcounter{notecounter}
\subsection*{Supplementary Note \thenotecounter:~Fluctuating sink model with additive Dirichlet noise}
\label{ap:dirichlet_noise}

We start by a review of the original single-sink model as formulated by Corson~\cite{Corson2010}. Consider a linear flow network on a graph $G$ with $N$ nodes and $M$ edges summarized in the node set $V$ and edge set $E$. Then we choose one vertex to be the source of the network while all the others are sinks. We assume that the outflow at the sinks fluctuates in time, which is modelled by 
uncorrelated, iid Gaussian variables  $P_{i} \sim \mathcal{N}(\mu,\sigma),~i\in \{2,...,N\}$. Sources and sinks have to balance at any point in time such that 
\begin{equation}
    \sum_{i=1}^N P_i = 0.\label{ap:balancing}
\end{equation}
This equation immediately yields the statistical properties of the source as a consequence of the statistics of the sinks. We label the nodes such that the source has the index $i=1$ and obtain
\begin{align*}
    \langle P_1\rangle = -(N-1)\mu,
\end{align*}
where $\langle \cdot \rangle$ denotes the mean over different realizations of the Gaussian variables. 

The network structure described by the edge weights $w_\ell$ is then optimized to minimize the average dissipation
\begin{equation*}
    \langle D \rangle = \sum_{\ell=1}^M \frac{\langle F_\ell^2\rangle}{w_\ell}.
\end{equation*}
Here, $\langle F_\ell^2\rangle$ is the second moment of the flows which are determined by the statistics of the sources and sinks by virtue of Kirchhoff's equations. However, we assume that the budget for constructing or strengthening edges is limited leading to a resource constraint
\begin{equation*}
    \sum_{\ell \in E}w_{\ell}^\gamma \le 1.
\end{equation*}
Here, $\gamma$ is a cost parameter that controls how expensive it is to increase the weight of an edge. The main challenge when performing the minimization is the interdependence between link weights and flows which cannot be varied independently. Solving the constrained optimization problem with using the method of Lagrange multipliers yields two conditions: First, the flows are determined by the edge weights $w_\ell$ and the inflows $P_i$ via Kirchhoff's equations. Second, the optimal weights are related to the flows as
\begin{equation}
    w_{\ell}=\frac{\langle F^2_{\ell}\rangle^{1/(1+\gamma)}}{\left(\sum_{e\in E}\langle F^2_{e}\rangle^{\gamma/(1+\gamma)}\right)^{1/\gamma}}.
    \label{eq:Corson-wl}
\end{equation}
Corson~\cite{Corson2010} and Katifori et al.~\cite{Kati10} independently came up with a procedure to solve this problem in an iterative self-consistent manor starting from a random initial guess for the edge weights. Then one out the following steps repeatedly until the procedure converges:
\begin{enumerate}
\item The moments of the flows are determined via Kirchhoff's equations. 
\item The weights are updated according to Eq.~\eqref{eq:Corson-wl}.
\end{enumerate}
Notably, the system can admit several local minima, and it depends on the initial guess, to which minimum the iteration converges. 

Now we extend this set-up to a network with multiple sources. Assume that we again label the nodes such that the first $N_s$ nodes are sources and the remaining $N-N_s$ nodes are sinks, which are still uncorrelated, iid Gaussian variables. However, when considering more than one source, $N_s>1$, the distribution of the sources is not completely determined by Equation~(\ref{ap:balancing}), but has additional degrees of freedom. We use this degree of freedom to put additive Dirichlet noise $X_i \sim \text{Dir}(\alpha)$ on the sources. In particular, we assume that
\begin{equation*}
    P_{i} = -\frac{1}{N_s}\sum_{i=N_s+1}^N P_i + K  \left(\frac{1}{N_s}-X_i\right),\quad X_i \sim \text{Dir}(\alpha), i = 1,\ldots,N_s,
\end{equation*}
where $K\in \mathbb{R}$ is a scaling parameter. Importantly, the Dirichlet variables sum to unity for all realizations,
\begin{align}
    \sum_{i=1}^{N_s} X_i = 1.
\end{align}
The parameter $\alpha$ determines the shape of the dirichlet distribution and tunes the variability of the sources. For $\alpha=1$, the distribution is flat such that the individual $X_i$ fluctuate strongly. For $\alpha \gg 1$, the distribution strongly peaks around the mean such that fluctuations of the individual $X_i$ are small. 

The moments of the Dirichlet random variables are given by
\begin{align*}
    \langle X_i\rangle &= \frac{1}{N_s} \\
    \langle X_i^2\rangle &= \frac{(N_s-1)}{N_s^2(N_s\alpha+1)}+\frac{1}{N_s^2} \\
    \langle X_i X_j\rangle &= -\frac{1}{N_s^2(N_s\alpha+1)}+\frac{1}{N_s^2}.
\end{align*}
Note that this results in a vanishing sum over the second moments
\begin{equation*}
    \sum_{i=1}^{N_s}\langle X_iX_j\rangle = 0,
\end{equation*}
by virtue of the definition of the Dirichlet parameter and the first equation. Now we can shift to new random variables $Y_i = K\left(\frac{1}{N_s}-X_i\right)$ with zero mean, $\langle Y_i \rangle = 0$, and the following second moments
\begin{equation*}
\begin{aligned}
       \langle Y_i Y_j\rangle &= K^2\left\langle\left(\frac{1}{N_s}-X_i\right)\left(\frac{1}{N_s}-X_j\right)\right\rangle\\ &= K^2\left(-\frac{1}{N_s^2}+\langle X_iX_j\rangle\right) \\
&= -K^2 \frac{1}{N_s^2(N_s\alpha+1)},\quad i\neq j,\\
\langle Y_i^2\rangle &=  K^2\left\langle\left(\frac{1}{N_s}-X_i\right)^2\right\rangle = K^2\frac{(N_s-1)}{N_s^2(N_s\alpha+1)}
\end{aligned}
\end{equation*}
such that the sum over the second moment still vanishes $\sum_{i=1}^{N_s}\langle Y_iY_j\rangle = 0$. 

With these results we can compute the second moments of the inflows $\langle P_i P_j \rangle$. Given the network topology and the edge weights $w_\ell$, we can further compute the moments of the edge flows $\langle F_\ell^2 \rangle$ using Kirchhoff's equations. Then one can proceed as in the original model as outlined above. Starting from an initial guess of the edge weights, we iterate the following two steps until convergence:
\begin{enumerate}
\item The moments of the flows $\langle F_\ell^2 \rangle$ are determined from the moments $\langle P_i P_j \rangle$ via Kirchhoff's equations. 
\item The weights are updated according to Eq.~\eqref{eq:Corson-wl}.
\end{enumerate}

\clearpage

\refstepcounter{notecounter}
\subsection*{Supplementary Note \thenotecounter:~Clustering in primal and dual graphs}

While we used spectral clustering to find communities in both primal and dual graphs to enable an analytical treatment in the main manuscript, there are various different algorithms to determine the community structure in a given network.
A commonly used and efficient algorithm is the \textit{Louvain Community Detection Algorithm}, which is based on finding a community partition by iteratively optimizing the modularity of this partition.
We are using the algorithm implemented in Networkx-2.8.6 \cite{hagberg2008exploring}, which uses a combination of \cite{traag2019louvain} and \cite{blondel2008fast}.

The algorithm consists of two steps that are run iteratively.
Starting from a partition in which every node is in its own community, the change in modularity $\Delta Q$ for removing an isolated node $i$ from their lonely community and moving them to a neighboring community $\mathcal{C}$ is calculated using
\begin{align*}
    \Delta Q &= \frac{k_{i,\text{in}}}{2m} - \xi \; \frac{\mathcal{S}_\text{tot} \cdot k_i}{2m^2}
\end{align*}
where $m$ is the total number of edges in the graph, $k_{i, \text{in}}$ is the sum of the weights of the edges connecting node $i$ to the community $\mathcal{C}$, and $\mathcal{S}_\text{tot}$ is the sum of the weights of all edges connected nodes in $\mathcal{C}$.
A resolution parameter $\xi$ smaller than $1$ favors larger communities, while $\xi$ larger than $1$ favors smaller communities.
The first step continues moving isolated nodes to neighboring communities until no further increase in modularity gain can be achieved.
In the second step, the previously found communities are interpreted as nodes of a new network, and the weights of the new edges that connect the new nodes are given be the sum of the edges that connected the different communities.
Subsequently, the first step can be executed again using this new network to find larger communities. 
These two steps are run iteratively until no increase in modularity that is larger than a small threshold is achieved.

Choosing a resolution parameter of $\xi=0.01$ and running the clustering algorithm for the primal and dual graph of the \textit{Acer platanoides} leaf, we find the community structure of the given graphs and compare them with the ones found by hierarchical spectral clustering.
The results are shown in Fig.~\ref{fig:clustering_primal_dual}.
Note, the set of edges that separate the communities are the edges connecting the different communities in the case of primal graphs, while they are along a path in the case of dual communities.
In both cases the primal (i.e. original) graphs are shown in the background in each panel in Fig.~\ref{fig:clustering_primal_dual}.
The separation along veins is even more pronounced and accurate when using the Louvain method, indicating that the description as dual graphs reveals important structural features of the  topology of leaf venation networks.

\clearpage